\documentclass{article}[a4paper]
\usepackage{enumerate}
\usepackage{amssymb}
\usepackage{amsmath}
\usepackage{mathrsfs}
\usepackage{graphicx}
\usepackage{subfigure}
\usepackage{color}
\usepackage{cleveref}
\usepackage{amsthm}
\usepackage{authblk}
\usepackage{fullpage}

\newtheorem{theorem}{Theorem}

\newtheorem{proposition}[theorem]{Proposition}
\theoremstyle{remark}
\newtheorem{remark}[theorem]{Remark}
\theoremstyle{definition}

\usepackage{forest}
\usepackage{algorithm}
\usepackage{algorithmic}
\usepackage{mathtools}
\newcommand{\E}{\mathbb{E}}
\newcommand{\ind}[1]{\mathsf{1}{\left\{#1\right\}}}
\newcommand{\cF}{\mathcal{F}}

\newcommand{\hI}{\hat{I}}

\newcommand{\bI}{\overline{I}}
\newcommand{\ba}{\overline{\alpha}}
\newcommand{\yS}{y_*}
\newcommand{\zS}{z_*}
\newcommand{\vS}{v_*}
\newcommand{\Real}{\mathbb{R}}

% \journal{Performance Evaluations}

\title{
% Online Algorithms with Iterate-dependent Noise Sequences  for Web-crawling
%Convergence of Stochastic Approximation with Momentum and Application to Web-Crawling
Online Algorithms for Estimating Change Rates of Web Pages\footnote{A shorter version \cite{avrachenkov2020change} of this paper appeared in the proceedings of VALUETOOLS 2020 conference. The novel contributions here include i.) an additional change rate estimation scheme  (this is a stochastic approximation scheme with momentum)
and its analysis and ii.) additional experiments including one that compares the performance of all our estimators based on real data (Wikitraces).} \footnote{This is the author version of the paper accepted to the {\it International Journal of Performance Evaluation}, Elsevier.}
}

\author[1]{Konstantin Avrachenkov}
\author[1]{Kishor Patil}
\author[2]{Gugan Thoppe}
\affil[1]{INRIA Sophia Antipolis, France 06902} 
% \footnote{email: k.avrachenkov@inria.fr, kishor.patil@inria.fr, gugan.thoppe@gmail.com}}
\affil[2]{Indian Institute of Science, Bengaluru, India 560012}
\affil[ ]{\textit {k.avrachenkov@inria.fr, kishor88k@gmail.com,  gthoppe@iisc.ac.in}}
%\eid{123@gmail.com}

\begin{document}
\maketitle

\begin{abstract}
A search engine maintains local copies of different web pages to provide quick search results. This local cache is kept up-to-date by a web crawler that frequently visits these different pages to track changes in them.  Ideally, the local copy should be updated as soon as a page changes on the web. However, finite bandwidth availability and server restrictions limit how frequently different pages can be crawled. This brings forth the following optimization problem: maximize the freshness of the local cache subject to the crawling frequencies being within prescribed bounds. While tractable algorithms do exist to solve this problem, these either assume the knowledge of exact page change rates or use inefficient methods such as MLE for estimating the same. We address this issue here. 

We provide three novel schemes for online estimation of page change rates, all of which have extremely low running times per iteration. The first is based on the law of large numbers and the second on stochastic approximation. The third is an extension of the second and includes a heavy-ball momentum term. 
% The third is an extension of the second and involves an additional momentum term.
%
All these schemes only need partial information about the page change process, i.e., they only need to know if the page has changed or not since the last crawled  instance. Our main theoretical results concern asymptotic convergence and convergence rates of these three schemes. In fact, our work is the first to show convergence of the original stochastic heavy-ball method when neither the gradient nor the  noise variance is uniformly bounded. We also provide some numerical experiments (based on real and synthetic data) to demonstrate the superiority of our proposed estimators over existing ones such as MLE. We emphasize that our algorithms are also readily applicable to the synchronization of databases and network inventory management.
\end{abstract}

\section{Introduction}
The worldwide web is a highly complex entity: it has a lot of interlinked information, and both the information and the links keep evolving. Nevertheless, even in this challenging setup, one still expects a search engine to provide accurate and up-to-date search results instantaneously. To fulfill this expectation, a search engine maintains a local cache of important web pages, that it updates frequently by using a crawler (also referred to as a web spider or a web robot). Specifically, the job of a crawler \cite{heydon1999mercator, castillo2005effective, Olston2010, Edwards2001, avrachenkov2011optimal} is (a) to access various pages on the web at specific frequencies so as to determine if any changes have happened to the content since the last crawled instance; and (b) to update the local cache whenever there is a change\footnote{A web crawler is also supposed to discover new pages, but we don't focus on this task in this work.}. There are, however, two key constraints on the different crawling frequencies. The first is due to limitations on the available bandwidth. The second one, known as the politeness constraint, arises because of the bounds placed by servers on the number of pages that can be accessed in a short amount of time. The search engine thus needs to solve the following optimization problem: maximize the freshness of the local database subject to the crawling frequencies satisfying the above constraints. 

In the early 2000s, the web crawling problem used to be  formulated as follows \cite{cho2000evolution, cho2000synchronizing, matloff2005}. The whole web consists of $n$ pages, all have equal importance, and there are no politeness constraints. Further, the times at which different pages change are independent Poisson point processes with different rates \cite{cho2000evolution, brewington2000dynamic}. On the search engine side, the local cache consists of a copy of each of these $n$ pages. Each copy is updated at regular intervals of time by  crawling the original page at a certain (known) frequency. Finally, the  freshness of the local cache at time $t \geq 0$ is defined to be $r \in [0, 1]$ if $r$ fraction of the local elements matches the actual versions on the web. The goal then is to find an update policy that maximizes the time-averaged freshness of the local cache and, also, satisfies the bandwidth constraint. 

Finding an exact solution to this problem is hard. Hence, numerical solutions were obtained in \cite{cho2000evolution, cho2000synchronizing} for small values of $n.$ These showed that the optimal crawling policy could be very different to both the uniform as well as the proportional policy, i.e., crawling each page at the same frequency  or at one that is proportional to its change rate. In fact, somewhat surprisingly, it was also found that the optimal policy may often include ignoring pages that change too frequently, i.e., not crawling them at all.

In 2003, the freshness definition was modified to include different weights for different pages depending on their importance, e.g., represented as the frequency of requests for different pages \cite{cho2003effective}. This was done in line with the view that only a finite number of pages can be crawled in any given time frame; hence, to improve the utility of the local database, the freshness criteria should be biased more towards important pages. Numerical solutions, again for small $n,$ confirm that page weights do substantially influence the optimal crawling policy. 

While the general $n$ case is still unsolved, a recent breakthrough work \cite{Azar2018} showed how an optimal randomized crawling policy can, nonetheless, be found very efficiently (in just $O(n \log n)$ operations). In particular, this solution pertains to the case where, for each web page, even the set of access times forms a Poisson point process. An approach to derandomize this policy to handle the original setup with periodic crawling is also discussed there. In synthetic experiments, this resultant policy is claimed to show performances very similar to the one obtained via numerical solutions. This work was recently extended to cover the case with politeness constraints as well \cite{kolobov2019optimal}. 

There is also a separate study  \cite{avrachenkov2016whittle,nino2014dynamic} which provides a Whittle index based dynamic programming approach to optimize the schedule of a web crawler. In that approach, the page/catalog freshness estimate also influences the optimal crawling policy. 

As can be seen, several algorithms do exist to determine the optimal crawling policy. However, they either presume prior knowledge of the exact page change rates, which is unrealistic in practice, or, alternatively, use inefficient ideas for estimating the same. We now provide a brief overview of such approaches and the issues that plague them. 

To the best of our knowledge, three other estimators exist in the literature: the naive estimator \cite{douglis1997rate, wills1999towards, wolman1999scale}, the Maximum Likelihood Estimator (MLE) \cite{cho2003estimating}, and the Moment Matching (MM) estimator \cite{upadhyay2019}. The naive estimator is simply the ratio of the observed number of changes to the total monitoring time period. This is clearly biased since the crawler only has access to partial information about the page change process (remember, it only gets to see if a page has changed or not since the last crawled instance). To overcome this bias issue, MLE instead estimates the rate of change by identifying the parameter value that maximizes the likelihood of the page change observations. This idea performs quite well in experiments; in fact, it also works when access to a page is only possible at irregular intervals of time. However, MLE lacks a closed form expression and suffers from two issues: (a) instability, i.e., the estimator value equals $\infty$ as long as a page change is detected in every access; and (b) computational intractability, i.e., the estimate needs to be recomputed from scratch each time a new observation is made. The latter makes MLE impractical to use when the data set of observations is quite large. Finally, in the MM estimator, one looks at the fraction of times no changes were detected during page accesses and then, using a moment matching method, estimates the change rate. Unfortunately, like MLE, the MM estimator also suffers from instability and computational issues.

Our exact problem statement and the main contributions can now be summarized as follows. We consider a single page and, as in \cite{Azar2018}, presume that the page change times and page access times are independent homogeneous Poisson point processes. Thus, each of these processes can be characterized by a single parameter, which we denote here by $\Delta$ and $p,$ respectively. Importantly, we assume that only $p$ is known. We then develop three approaches for online estimation of $\Delta,$ which only need to know if this page has changed or not between two successive accesses. The key word here is `online'. This means, unlike MLE and the MM estimator, our estimates can be incrementally updated using extremely simple, low cost formulas as and when a new observation becomes available. Thus, our estimators do not face computational issues of the kind mentioned above. Also, they do not face any instability issues.

Our first estimator uses the Law of Large Numbers (LLN), while the second and third estimators are based on Stochastic Approximation (SA) principles. Specifically, the update rule for the first estimator is derived using a formula for the probability that there is a page change between two successive accesses. In contrast, the second estimator is constructed via a standard trick in SA. A key ingredient there is a function that is carefully chosen so that it satisfies two properties: (a) noisy estimates of its value for any given input can be easily obtained; and (b) its expected value is linear and, importantly, $\Delta$ is its unique zero. The update rule for the third estimator is similar to that of the second one, except that it has an additional momentum term (in the heavy-ball sense). As we show in Section~\ref{s:SAM.Literature}, it is also possible to view our second and third estimators as a Stochastic Gradient Descent (SGD) method and as an SGD method with heavy-ball momentum, respectively. We emphasize that even though we present
our results in the context of web crawling, our algorithms are equally applicable to the synchronization of databases \cite{cho2000synchronizing} and the problem of network inventory management \cite{andrews2020tracking}.

% Note that the update rules for the second and third estimators can also be viewed as a  me, respectively. 

Our main theoretical result is that all our estimators almost surely (a.s.) converge to $\Delta;$ thus, they all are asymptotically consistent. As far as we know, our result concerning the third estimator is the first to show convergence of an SGD method with heavy-ball momentum when neither the gradient nor the  noise variance is a priori assumed to be uniformly bounded. While similar settings have also been dealt with in \cite{gadat2018stochastic}, the analysis there concerns the stochastic analogue of a modified heavy-ball method and not the original one that was proposed in  \cite{polyak1964some}.  Separately, we also derive the convergence rates of the first two estimators in the expected error sense. Based on the existing literature, we also provide a loose guess on the convergence rate of the third estimator. 

We also provide numerical simulations to compare the performance of our online schemes to each other and also to that of the (offline) MLE estimator. From these experiments, it can be explicitly seen that our estimators give performances comparable to that of MLE. This was a bit surprise to us since our estimators, compared to MLE, have extremely low running times per iteration. Also, unlike MLE, they ignore the actual lengths of intervals between two page accesses. Among our three estimators, LLN and SAM show similar performances and both typically outperform our SA estimator. In particular, the  momentum in the third estimator helps in accelerating the estimation whenever $p \ll \Delta$ (the rate at which the page is accessed is much smaller than the rate at which it changes). Our experiments are based on both real (Wikipedia traces) as well as synthetic data sets. In the experiment using Wikipedia traces, we also verify our modeling assumption that the page change process is a Poisson point process. 

The rest of this paper is organized as follows. The next section provides a formal summary of this work in terms of the setup, goals, and key contributions. It also gives explicit update rules for all of our online schemes. In Section \ref{sec3}, we formally analyse  their convergence and the rates of convergence. The numerical experiments discussed above are given in Section~\ref{sec4:numericalExpts}. Then, in Section~\ref{sec:Motivation}, we provide some motivation on how one can use our estimates to find the optimal crawling rates. Finally, we conclude in Section~\ref{sec7} with some future directions.

\section{Setup, Goal, and Key Contributions}
\label{sec2}
The three topics are individually described below. \\[-1ex]

\noindent \textbf{Setup}: Without loss of generality, we work with a single web page. We presume that the actual times at which this page changes is a time-homogeneous Poisson point process in $[0, \infty)$ with a constant but unknown rate $\Delta.$ Independently of everything else, this page is crawled (accessed) at the random instances $\{t_{k}\}_{k\geq 0} \subset [0, \infty),$ where $t_{0} = 0$ and the inter-arrival times, i.e., $\{t_{k} - t_{k - 1}\}_{k \geq 1},$ are IID exponential random variables with a known rate $p.$ Thus, the times at which this page is crawled is also a time-homogeneous Poisson point process but with rate $p.$ At time instance $t_{k},$ we get to know if the page got modified or not in the interval $(t_{k - 1}, t_{k}],$ i.e., we can access the value of the indicator
\begin{equation*}
I_{k} := \begin{cases*}
                1, & if the page got modified in $(t_{k-1},t_{k}],$ \\
                0, & otherwise.
            \end{cases*}
\end{equation*}

The above assumptions are standard in the crawling literature. Nevertheless, we now provide a short justification for the same. Our assumption that the page change process is a Poisson point process is based on the experimental evidence collected in \cite{brewington2000dynamic, brewington2000keeping, cho2000evolution}. An additional validation is provided by us in this work. Specifically, we selected an arbitrary page from the list of frequently edited Wikipedia pages. We extracted the complete history of this web page (exact dates and times of different changes) for a period of five months (April 01, 2020 to August 31, 2020). Thereafter, we calculated the time between successive changes and then used this data to produce a Q-Q plot. This plot confirms that the set of quantiles for the actual data indeed matches linearly with the quantiles of exponential distribution, as predicted. Further details about this experiment can be found in Section~\ref{sec4:numericalExpts}. Some generalized models for the page change process  have also been considered in the literature \cite{matloff2005, singh2007}; however, we do not pursue them here. 

Our assumption on $\{I_k\}$ is based on the fact that a crawler can only access incomplete knowledge about the page change process. In particular, a crawler does not know when and how many times a page has changed between two crawling instances. Instead, all it can track is the status of a page at each crawling instance and know if it has changed or not with respect to the previous access.  Sometimes, it is possible to also know the time at which the page was last modified \cite{castillo2005effective, cho2003estimating}, but we do not consider this case here. \\[-1ex]

\noindent \textbf{Goal}: Develop online algorithms for estimating $\Delta$ in the above setup. The motivation for doing this is that such estimates can then be used to estimate the optimal crawling rates \cite{Azar2018,kolobov2019staying}; see Section~\ref{sec:Motivation} for more details on this. \\[-1ex]

% Subsequently, find optimal crawling rates $\{p_i^*\}$ so that the overall freshness of the local cache defined by
% %
% \begin{equation}\label{Fresh}
%     \E\bigg[\dfrac{1}{T} \int\displaylimits_{0}^{T}\bigg( \sum_{i = 1}^{N}w_i \ind{\text{Fresh}(i,t)}\bigg) dt\bigg]
% \end{equation}
% %
% is maximised subject to $\sum_{i = 1}^N p_i \leq B.$  Here, 
% $T > 0$ is some finite horizon, $B \geq 0$ is a bound on the overall crawling frequency, $\ind{}$ is the indicator, and \text{Fresh$(i, t)$} is the event that page $i$ is fresh at time $t,$ i.e., the local copy matches the actual page. \\[-1ex]

\noindent \textbf{Key Contributions}:
We provide three online methods for estimating the page change rate $\Delta.$ The first is based on the law of large numbers, while the second and third are based on stochastic approximation theory, with the third one having an additional momentum component. If $\{x_k\},$ $\{y_k\},$ and $\{z_k\},$ denote the iterates of these three methods, respectively, then their update rules are as shown below. 
\begin{itemize}
    \item \emph{LLN Estimator}: Its $k$-th estimate is given by
    \begin{equation}
        \label{eqn:LLN_Est}
        x_{k} = p \hI_k/(k + \alpha_k - \hI_k), \quad k \geq 1.
    \end{equation}
    Here, $\hI_k = \sum_{j = 1}^k I_j;$ hence, $\hI_k = \hI_{k - 1} + I_k.$ Further, $\{\alpha_k\}$ is any positive sequence satisfying the conditions in Theorem~\ref{thm:LLN_Est}; e.g.,  $\alpha_k$ could be $\log k,$ $\sqrt{k},$ or  identically $1.$  \\[-1ex]
    
    \item \emph{SA Estimator}: Given some initial value $y_0,$ the update rule for the SA estimator is
    \begin{equation}
        \label{eqn:SA_Est}
        y_{k + 1} = y_k + \eta_k[I_{k + 1}(y_k + p) - y_k], \quad k \geq 0.
    \end{equation}
    Here, $\{\eta_k\}$ is any stepsize sequence that satisfies the conditions in Theorem~\ref{thm:SA_Est}. For example, $\eta_k$ could be $1/(k + 1)^\eta$ for some constant $\eta \in (0, 1].$ 
    \item \emph{SAM Estimator $($SA Estimator with Momentum$)$}: Given some initial values $z_0, z_{-1},$ the SAM estimator satisfies
    \begin{equation}
        \label{eqn:SAM_Est}
        z_{k + 1} = z_k + \eta_k[I_{k + 1}(z_k + p) - z_k] + \zeta_k(z_{k} - z_{k - 1}), \quad k \geq 0.
    \end{equation}
    Here, $\{\eta_k\}$ and $\{\zeta_k\}$ are any stepsize sequences that satisfy the conditions given in Theorem~\ref{thm:SAM_Est}. For example, one could pick a $\beta \in (1/2, 1]$ and let $\beta_k = 1/(k + 1)^\beta.$ Then, $\{\eta_k\}$ and $\{\zeta_k\}$ could be $\{1/(k + 1)^\eta\}$ and $\{(\beta_k - \omega \eta_k)/(\beta_{k - 1})\},$ respectively, where $\omega > 0$ is some constant and $\beta + 1/2 < \eta \leq 2\beta.$ While we do not show it, we conjecture that one can also pick $\beta \in (0, 1/2]$ and then choose $\eta$ so that $\beta < \eta \leq 2 \beta.$ Finally, note that if $\beta = \eta$ and $\omega = 1,$ then the asymptotic behaviour of \eqref{eqn:SAM_Est} will resemble that of  \eqref{eqn:SA_Est}; this is because $\zeta_k \equiv 0$ then.
\end{itemize}

We call these methods online because the estimates can be updated on the fly as and when a new observation $I_k$ becomes available. This contrasts the MLE estimator in which one needs to start the calculation from scratch each time a new data point  arrives. Also, unlike MLE, our estimators are never unstable; see Section~\ref{subsec:Comp} for the details. 

Our main results include the following. We show that all our three estimators, i.e., $x_k, y_k,$ and $z_k,$ converge to $\Delta$ a.s. Further, we show that 
\begin{enumerate}
    \item $\E|x_k - \Delta| = O\left(\max\left\{k^{-1/2}, \alpha_k/k\right\}\right),$ and 
    
    \item $\E|y_k - \Delta| = O(k^{-\eta/2})$ if $\eta_k = (k + 1)^\eta$ with $\eta \in (0, 1).$
\end{enumerate}
Separately, based on existing literature \cite{dalal2018finite2TS, dalal2019tale, kaledin2020finite}, we conjecture that $\E|z_k - \Delta| = \tilde{O}(k^{-\beta/2}),$ where $\tilde{O}$ hides logarithmic terms. We also provide several numerical experiments based on real as well as synthetic data  for judging the strength of our three proposed estimators. 

\section{Analysis of the Proposed Online Estimators} \label{sec3}
Here, we formally discuss the convergence and convergence rates of our three estimators. Thereafter, we compare their behaviors with those that already exist in the literature---the Naive estimator, MLE, and the MM estimator. We end with a summary of existing results on stochastic momentum methods and a discussion on how our convergence result for the SAM estimator extends our current understanding of such methods.

% we  also compare ourwith the existing ones on stochastic momentum methods and discuss its significance. 

\subsection{LLN Estimator}
\label{subsecLLN_Est}
Our first aim here is to obtain a formula for $\E[I_1].$ We shall use this later to motivate the form of our LLN estimator.

Let $\tau_1 = t_1 - t_0 = t_1,$ where the second equality holds  since $t_0 = 0.$ Then, as per our assumptions in Section~\ref{sec2}, $\tau_1$ is an exponential random variable with rate $p.$ Also, $\E[I_1 | \tau_1 = \tau ] = 1- \exp{(-\Delta \tau)}.$ Hence, 
\begin{equation}
\label{eqn:Exp_Ind}
   \E\big[I_1\big] = \Delta/(\Delta + p).
\end{equation}
This gives the desired formula for $\E[I_1].$ 

From this latter calculation, we have
\begin{equation}
\label{e:Actual_Delta_i_Formula}
\Delta = p \E[I_1]/(1 -\E[I_1]).
\end{equation}
Separately, because $\{I_k\}$ is an IID sequence and $\E|I_1| \leq 1$ , it follows from the strong law of large numbers that $\E\big[I_1\big] = \lim_{k \to \infty}  \sum_{j = 1}^k I_j /k \; \text{a.s.}$
Thus,
\[
\Delta = p \frac{\lim_{k \to \infty}\sum_{j = 1}^k I_j/k}{1 - \lim_{k \to \infty}\sum_{j = 1}^k I_j/k} \quad  \text{a.s.}
\]
Consequently, a natural estimator for  $\Delta$ is
\begin{equation}
    x_k' = p \frac{\sum_{j = 1}^k I_j/k}{1 - \sum_{j = 1}^kI_j/k} = p \frac{\hI_k}{k - \hI_k},
\end{equation}
where $\hI_k$ is as defined below \eqref{eqn:LLN_Est}.

Unfortunately, the above estimator faces an instability issue, i.e., $x'_k = \infty$ when $I_1, \ldots, I_k$ are all $1.$ To fix this, one can add a non-zero term in the denominator. The different choices then gives rise to the LLN estimator defined in \eqref{eqn:LLN_Est}.

The following result discusses the convergence and convergence rate of this estimator. 

\begin{theorem}
\label{thm:LLN_Est}
Consider the estimator given in \eqref{eqn:LLN_Est} for some positive sequence $\{\alpha_k\}.$
\begin{enumerate}
    \item If \; $\lim_{k \to \infty} \alpha_k/k= 0,$ then  $\lim_{k \to \infty} x_k = \Delta \; \text{a.s.}$
    
    \item Additionally, if \; $\lim_{k \to \infty} \log(k/\alpha_k)/k = 0,$ then 
    \[
    \E |x_k - \Delta| = O\left(\max\left\{k^{-1/2}, \alpha_k/k\right\}\right).
    \]
\end{enumerate}
\end{theorem}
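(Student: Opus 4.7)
Let $\mu := \E[I_1] = \Delta/(\Delta + p)$ and write $S_k := \hat{I}_k/k$, so that $x_k = p S_k/(1 + \alpha_k/k - S_k)$ and $\Delta = p\mu/(1-\mu)$. Everything will come from controlling $S_k - \mu$ together with the fact that the denominator $1 + \alpha_k/k - S_k$ stays bounded away from $0$ with overwhelming probability.

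\textbf{Part 1 (a.s.\ convergence).} The sequence $\{I_j\}$ is IID with $\E|I_1| \le 1$, so the strong law gives $S_k \to \mu$ a.s. Since $\alpha_k/k \to 0$ by hypothesis and $1 - \mu = p/(\Delta+p) > 0$, the denominator $1 + \alpha_k/k - S_k$ converges to $1 - \mu > 0$ a.s. Applying the continuous mapping theorem to the function $s \mapsto ps/(1-\mu-s)$ (well defined and continuous in a neighbourhood of $\mu$) yields $x_k \to p\mu/(1-\mu) = \Delta$ a.s.

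\textbf{Part 2 (rate).} A short algebraic manipulation gives
\begin{equation*}
x_k - \Delta \;=\; \frac{p}{1-\mu}\cdot\frac{S_k - \mu - \mu\alpha_k/k}{1 + \alpha_k/k - S_k},
\end{equation*}
so $|x_k - \Delta|$ is at most a constant times $(|S_k - \mu| + \alpha_k/k)$ divided by $1 + \alpha_k/k - S_k$. I would then split on the good event
\begin{equation*}
A_k \;:=\; \bigl\{\,|S_k - \mu| \le (1-\mu)/2\,\bigr\}.
\end{equation*}
On $A_k$ the denominator is bounded below by $(1-\mu)/2$, so on $A_k$ we get $|x_k - \Delta| \le C\bigl(|S_k - \mu| + \alpha_k/k\bigr)$ for a finite constant $C$ depending only on $p$ and $\Delta$. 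Taking expectation and using $\E|S_k - \mu| \le (\mathrm{Var}\,S_k)^{1/2} = O(k^{-1/2})$ bounds the contribution of $A_k$ by $O\bigl(\max\{k^{-1/2},\,\alpha_k/k\}\bigr)$.

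For $A_k^c$ I would use the crude deterministic bound coming directly from $\hat{I}_k \le k$, namely $k+\alpha_k - \hat{I}_k \ge \alpha_k$, hence $0 \le x_k \le pk/\alpha_k$, and so $|x_k - \Delta| \le pk/\alpha_k + \Delta$. Hoeffding's inequality for the bounded IID sum $\hat{I}_k$ yields $\P(A_k^c) \le 2\exp\bigl(-k(1-\mu)^2/2\bigr)$. Therefore the $A_k^c$-contribution is at most
\begin{equation*}
\bigl(pk/\alpha_k + \Delta\bigr)\cdot 2\exp\!\bigl(-k(1-\mu)^2/2\bigr) \;=\; 2p\exp\!\bigl(-k(1-\mu)^2/2 + \log(k/\alpha_k)\bigr) + O(e^{-ck}),
\end{equation*}
and the extra assumption $\log(k/\alpha_k)/k \to 0$ is exactly what forces the exponent to behave like $-\Theta(k)$ for all large $k$, making this term decay faster than any polynomial in $k$ and, in particular, $o(k^{-1/2})$. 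Combining the two contributions proves the claimed rate.

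\textbf{Expected main obstacle.} The delicate point is the interplay between the denominator $1+\alpha_k/k - S_k$ (which can be as small as $\alpha_k/k$) and the probability of the bad event. The Hoeffding tail gives exponential decay in $k$, and the role of the hypothesis $\log(k/\alpha_k)/k \to 0$ is precisely to guarantee that this exponential beats the worst-case blow-up $pk/\alpha_k$ of the estimator on $A_k^c$; getting this tradeoff right is the only step that is not essentially automatic.
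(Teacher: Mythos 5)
Your argument is correct, and it follows the same overall strategy as the paper's proof: control $S_k-\mu$ by concentration, and split into a good event where the denominator $1+\alpha_k/k-S_k$ is bounded away from zero and a bad event of exponentially small probability on which one falls back on the crude bound $x_k\le pk/\alpha_k$. The execution differs in two respects. First, you start from the single exact identity $x_k-\Delta=\frac{p}{1-\mu}\cdot\frac{S_k-\mu-\mu\alpha_k/k}{1+\alpha_k/k-S_k}$, whereas the paper uses a two-term triangle-inequality decomposition through the intermediate quantity $\mu/(\alpha_k/k+1-\mu)$; the identity is cleaner and makes the bias term $\mu\alpha_k/k$ appear automatically. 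Second, your good event uses the \emph{fixed} threshold $|S_k-\mu|\le(1-\mu)/2$ together with Hoeffding, so the bad-event contribution is $\bigl(pk/\alpha_k+\Delta\bigr)\cdot 2\exp\bigl(-k(1-\mu)^2/2\bigr)$, which the hypothesis $\log(k/\alpha_k)/k\to 0$ renders superpolynomially small; the paper instead uses an \emph{adaptively shrinking} threshold $\delta_k$ with $\delta_k^2=6\log(k/\alpha_k)/(k\mu)$ and a Chernoff bound, tuned so that the bad-event term is exactly $O(\alpha_k/k)$. Both uses of the hypothesis are for the same purpose (beating the worst-case blow-up $k/\alpha_k$ of the estimator), but your fixed-threshold version is more elementary and avoids having to verify that $(1+\delta_k)\mu<1$ for large $k$. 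Both yield the claimed rate $O\bigl(\max\{k^{-1/2},\alpha_k/k\}\bigr)$.
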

\begin{proof}
Let $\mu = \E[I_1],$ $\bI_k = \hI_k/k,$ and $\ba_k = \alpha_k/k.$ Then, observe that \eqref{eqn:LLN_Est} can be rewritten as $x_k = p \bI_k/(1 + \ba_k - \bI_k).$
Now, $\lim_{k \to \infty} \bI_k = \mu$ a.s. and $\lim_{k \to \infty} \ba_k = 0;$ the first claim holds due to the strong law of large numbers, while the second one is true due to our assumption. Statement 1. is now easy to see. 

We now derive Statement 2.  From \eqref{e:Actual_Delta_i_Formula}, we have
\[
|x_k - \Delta| =  \left|x_k - p \frac{\mu}{1 - \mu} \right| \leq p\left(A_k + B_k\right),
\]
where 
\[
A_k = \left|\frac{\bI_k}{\ba_k + 1 - \bI_k} -  \frac{\mu}{\ba_k + 1 - \mu}\right| \quad \text{ and }  \quad
B_k = \left|\frac{\mu}{\ba_k + 1 - \mu} - \frac{\mu}{1 - \mu} \right|.
\]
Since $\alpha_k > 0$ and, hence, $\ba_k > 0,$ it follows that
\[
B_k = \ba_k \frac{\mu}{(1 - \mu)(\ba_k + (1 - \mu))} \leq \ba_k \frac{\mu}{(1 - \mu)^2}.
\]
Similarly, 
\[
A_k \leq \left(\frac{1 + \ba_k}{1 - \mu}\right) \left(\frac{|\bI_k - \mu|}{\ba_k + 1 - \bI_k}\right).
\]
It is now easy to see that $\E [B_k] = O(\ba_k).$  The rest of our arguments concern how fast $\E [A_k]$ decays to $0.$

Let $\{\delta_k\}$ be a deterministic sequence that is both non-negative and decays to $0.$ We will describe how to pick this later. Let $k$ be such that $(1 + \delta_k)\mu < 1.$ Then, 
\[
\E\left[\frac{|\bI_k - \mu|}{\ba_k + 1 - \bI_k}\right] \leq \E [C_k] + \E [D_k],
\]
where 
\[
C_k = \frac{|\bI_k - \mu|}{\ba_k + 1 - \bI_k}\ind{\bI_k - \mu \leq \delta_k \mu},
\]
and 
\[
D_k = \frac{|\bI_k - \mu|}{\ba_k + 1 - \bI_k}\ind{\bI_k - \mu \geq \delta_k \mu}.
\]
On the one hand, 
\[ 
\E [C_k] \leq \frac{\E|\bI_k - \mu|}{\ba_k + 1 - (1 + \delta_k)\mu}  \leq \frac{\sqrt{\text{Var}[I_1]}}{\sqrt{k} (\ba_k + 1 - (1 + \delta_k) \mu)}.
\]
On the other hand, since  $|\bI_k - \mu| \leq 2$ and $1 - \bI_k \geq 0,$ it follows by applying the Chernoff bound that
\[
\E [D_k] \leq \frac{2}{\ba_k} \Pr \{\bI_k \geq (1 + \delta_k)\mu \} \leq \frac{2}{\ba_k} \exp\left(-k\delta_k^2  \mu/3\right).
\]

Now, pick $\{\delta_k\}$ so that $\delta_k^2 = 6\log(1/\, \ba_k)/(k \mu) \vee 0$ for all $k \geq 1.$ Notice that this choice is both non-negative and decays to $0$ due to our assumptions on $\{\alpha_k\};$ thus, this is a valid choice. It is now easy to see that $\E[C_k] = O(1/\sqrt{k})$ and $\E [D_k] =O( \ba_k).$ 

The desired result now follows. 
\end{proof}

\subsection{SA Estimator}
\label{subsec:SA}
Let $I$ denote a random variable with the same distribution as $I_1.$ Also, for $y \in \Real,$ let $H(y, I) = I(y + p) - y.$ Next, define $h: \Real \to \Real$ using $h(y) := \E[H(y, I)].$ Clearly, $h(y) = p(\Delta - y)/(\Delta + p);$ further, $\Delta$ is its unique zero. The theory of stochastic approximation then suggests using the update rule given in \eqref{eqn:SA_Est} for estimating $\Delta.$ For later use, also define
\begin{align}
    M_{k + 1} = {} & [I_{k+1}(y_k + p) - y_k] - h(y_k) \nonumber \\
    = {} & \left[I_{k + 1} - \frac{\Delta}{\Delta + p}\right](y_k + p). \label{d:Mart.Diff.Seq}
\end{align}

We now discuss the convergence and convergence rate of \eqref{eqn:SA_Est}.

\begin{theorem}
\label{thm:SA_Est}
Consider the estimator given in \eqref{eqn:SA_Est} for some positive stepsize sequence $\{\eta_k\}.$
\begin{enumerate}
    \item Suppose that $\sum_{k = 0}^\infty \eta_k = \infty$ and $\sum_{k = 0}^\infty \eta_k^2 < \infty.$ Then, $\lim_{k \to \infty} y_k = \Delta$ a.s.

    \item Suppose that $\eta_k = 1/(k + 1)^\eta$ for some constant $\eta \in (0, 1).$ Then,
    \[
        \E |y_k - \Delta| = O\left(k^{-\eta/2}\right).
    \]
\end{enumerate}
\end{theorem}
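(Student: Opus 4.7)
The plan is to base both parts on a single quadratic-Lyapunov recursion for $V_k := (y_k - \Delta)^2$. Since $h(y) = -\tfrac{p}{\Delta+p}(y - \Delta)$, the update rule in \eqref{eqn:SA_Est} can be rewritten as
\begin{equation*}
y_{k+1} - \Delta = \left(1 - \eta_k \tfrac{p}{\Delta + p}\right)(y_k - \Delta) + \eta_k M_{k+1},
\end{equation*}
with $M_{k+1}$ as in \eqref{d:Mart.Diff.Seq}. Because $\E[I_{k+1}|\cF_k] = \Delta/(\Delta+p)$ and $I_{k+1}$ is Bernoulli, $\{M_{k+1}\}$ is a martingale-difference sequence with the key bound
\begin{equation*}
\E[M_{k+1}^{2}\mid\cF_k] \;=\; \tfrac{\Delta p}{(\Delta+p)^2}(y_k + p)^{2} \;\leq\; K_1 + K_2 (y_k - \Delta)^{2}
\end{equation*}
for constants $K_1,K_2 > 0$ depending only on $\Delta, p$. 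Squaring the displayed recursion and taking conditional expectations then gives, after absorbing the small $\eta_k^{2}$ cross-term,
\begin{equation*}
\E[V_{k+1}\mid\cF_k] \;\leq\; \bigl(1 - 2\eta_k c + C\eta_k^{2}\bigr) V_k + C\eta_k^{2},
\end{equation*}
where $c := p/(\Delta + p)$ and $C$ is a finite constant.

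For Statement 1, I would apply the Robbins--Siegmund almost-supermartingale convergence theorem to this last inequality. Under $\sum\eta_k^{2} < \infty$, it yields (i) that $V_k$ converges almost surely to some finite, non-negative random variable, and (ii) that $\sum_k \eta_k V_k < \infty$ almost surely. Combined with $\sum_k \eta_k = \infty$, the only possible limit is $V_k \to 0$, i.e.\ $y_k \to \Delta$ almost surely. A preliminary step is to verify, by induction on $k$, that each $y_k$ is integrable (even square-integrable), which is immediate from the contractive form of the recursion once $\eta_k < 1$.

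For Statement 2, take unconditional expectations in the Lyapunov recursion to obtain $u_{k+1} \leq (1 - 2c\eta_k + C\eta_k^{2})u_k + C\eta_k^{2}$ with $u_k := \E V_k$. With $\eta_k = 1/(k+1)^{\eta}$ and $\eta \in (0,1)$, there exists $k_0$ beyond which $1 - 2c\eta_k + C\eta_k^{2} \leq 1 - c\eta_k$. Then a standard deterministic lemma (essentially Chung's lemma) implies $u_k = O(\eta_k) = O(k^{-\eta})$; the cleanest way is to verify by induction that $u_k \leq A\eta_k$ for a suitably large constant $A$, using $\eta_{k+1}/\eta_k = 1 - \eta/k + O(k^{-2})$ to dominate $(1-c\eta_k)\eta_k + C\eta_k^{2}/A$ by $\eta_{k+1}$ for all large $k$. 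Finally, Jensen's inequality gives $\E|y_k - \Delta| \leq \sqrt{u_k} = O(k^{-\eta/2})$.

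The main technical subtlety is that the noise variance is \emph{not} uniformly bounded in $y_k$; the bound $\E[M_{k+1}^{2}|\cF_k] \leq K_1 + K_2 V_k$ grows with $V_k$, which is what rules out naive application of SA rate results. The trick, both for convergence and for the rate, is that the quadratic term in $V_k$ coming from the noise only contributes an $O(\eta_k^{2})V_k$ correction, which is dominated by the strict contraction $-2c\eta_k V_k$ from the linear drift. Verifying this domination carefully, together with the Robbins--Siegmund step ensuring $V_k$ is tight enough for the inductive rate argument, is the part requiring most attention.
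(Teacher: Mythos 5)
Your proposal is correct, but for Statement 1 it takes a genuinely different route from the paper. The paper proves almost-sure convergence via the ODE method: it casts \eqref{eqn:SA_Est} as $y_{k+1} = y_k + \eta_k[h(y_k)+M_{k+1}]$ and verifies the four conditions of Theorem~\ref{thm:1TS.Conv} (Lipschitz $h$, global asymptotic stability of $\Delta$ for $\dot y = h(y)$, and the Borkar--Meyn scaled-limit condition via $h_\infty(y) = -yp/(\Delta+p)$ to guarantee stability of the iterates). You instead run a Robbins--Siegmund argument on $V_k = (y_k-\Delta)^2$: your conditional variance bound $\E[M_{k+1}^2\mid\cF_k] \le K_1 + K_2 V_k$ is exactly right, the resulting almost-supermartingale inequality $\E[V_{k+1}\mid\cF_k] \le (1-2c\eta_k + C\eta_k^2)V_k + C\eta_k^2$ is valid (the cross term vanishes under conditioning), and the conclusion $V_k \to 0$ a.s.\ from $\sum_k \eta_k V_k < \infty$ together with $\sum_k \eta_k = \infty$ is sound. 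What each approach buys: your Lyapunov route is more elementary and self-contained for this scalar linear problem, and it gets iterate stability for free from the supermartingale bound, whereas the paper must separately verify the scaled-ODE condition iv.) for that purpose; the paper's ODE machinery, on the other hand, generalizes immediately to nonlinear drifts and is reused verbatim for the SAM estimator, which is presumably why the authors chose it. For Statement 2 your argument (unconditional mean-square recursion plus a Chung-type induction showing $\E V_k = O(\eta_k)$, then Jensen) is essentially the same as the paper's, which delegates the identical recursion to the proof of Theorem~3.1 of the cited finite-sample SA reference. No gaps.
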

\begin{proof}
For $k \geq 0,$ consider the $\sigma-$field $\cF_k :=  \sigma(y_j, I_j, j \leq k).$ Then, from \eqref{eqn:Exp_Ind} and the fact that $\{I_k\}$ is an IID sequence, we get 
\[
\E[I_{k + 1}(y_k + p) - y_k |\cF_k] = \frac{\Delta}{\Delta + p}  (y_k + p) - y_k = h(y_k).
\]
Hence, one can rewrite \eqref{eqn:SA_Est} as 
\begin{equation}
\label{eqn:SAForm}
y_{k + 1} = y_k + \eta_k [h(y_k) + M_{k + 1}],
\end{equation}
where $M_{k + 1}$ is as in \eqref{d:Mart.Diff.Seq}.  

Since $\E[M_{k + 1}|\cF_k] = 0$ for all $k \geq 0,$ $\{M_k\}$ is a martingale difference sequence. Consequently, \eqref{eqn:SAForm} is a classical SA algorithm whose limiting ODE is 
\begin{equation}
\label{eqn:limODE}
    \dot{y}(t) = h(y(t)).
\end{equation}

We now make use of Theorem~\ref{thm:1TS.Conv} given in the Appendix to establish Statement~1. Accordingly, we verify the four conditions listed there. The stepsize Condition i.) directly holds due to our assumptions on $\{\eta_k\}.$ With regards to Condition ii.), recall we have already established above that $\{M_k\}$ is a martingale difference sequence with respect to $\{\cF_k\}$. The square-integrability condition holds since $|M_{k + 1}| \leq |y_k| + p$ which, in turn, implies that $\E[|M_{k + 1}|^2|\cF_k] \leq 2(p^2 \vee 1)(1 + |y_k|^2),$ as desired. Next, due to linearity, $h$ is trivially Lipschitz continuous. Further, $h(y) = 0$ if and only if $y = \Delta.$ This shows that $\Delta$ is the unique equilibrium point of \eqref{eqn:limODE}. Now, because the coefficient of $y$ in $h(y)$ is negative, it also follows that $\Delta$ is the unique globally asymptotically stable equilibrium of \eqref{eqn:limODE}. This verifies Condition iii.). We finally consider Condition iv.) Let $h_\infty(y) := -y p/(\Delta + p).$ Then, clearly, $h_c \to h_\infty$ uniformly on compacts as $c \to \infty.$ Furthermore, since the coefficient of $y$ is negative in the definition of $h_\infty,$ it is easy to see that the origin is the unique globally asymptotically stable equilibrium of the ODE $\dot{y}(t) = h_\infty(y(t)),$ as required. Statement 1. now follows. 

We now sketch a proof for Statement 2. First, note that
\[
y_{k + 1} - \Delta = (1 - a \eta_k) (y_k - \Delta) + \eta_k M_{k + 1},
\]
where $a = p/(\Delta + p).$ Now, since $\E[M_{k + 1}|\cF_k] = 0,$ we have
\[
\E[(y_{k + 1} - \Delta)^2|\cF_k] = (1 - a \eta_k)^2(y_k - \Delta)^2 + \eta_k^2 \E[M_{k + 1}^2|\cF_k].
\]
Recall that $\E[M_{k + 1}^2 |\cF_k] \leq C(1 + y_k^2)$ for some constant $C \geq 0.$ By substituting this above and then repeating all the steps from the proof of  \cite[Theorem~3.1]{dalal2018finite1TS}, it is not difficult to see that Statement 2 holds as well.
\end{proof}

\subsection{SA Estimator with Momentum}
As stated before, our SAM estimator is the SA estimator discussed above with an additional heavy-ball momentum term. Simulations in Section~\ref{sec4:numericalExpts} show that this simple modification results in a drastic improvement in performance.

We now discuss the convergence of the SAM estimator under the assumption that, for $k \geq 0,$
\begin{equation}
    \label{d:zeta_k}
    \zeta_k = \frac{\beta_k - \omega \eta_k}{\beta_{k - 1}},
\end{equation}
where $\omega > 0$ is some constant  and $\{\beta_k\}$ is some positive real sequence.  By substituting \eqref{d:zeta_k} and letting $u_k = (z_k - z_{k - 1})/\beta_{k - 1},$ observe that the update rule in \eqref{eqn:SAM_Est} can be rewritten as 
\[
    u_{k + 1} = u_k + \gamma_k \left [I_{k + 1}(z_k + p_i) - z_k\right] - \omega \gamma_k u_k,
\]
where $\gamma_k := \eta_k/\beta_k.$ 

For $k \geq 0,$ let $M_{k + 1}$ be as in \eqref{d:Mart.Diff.Seq}. Also, let $\cF_k$ denote the $\sigma$-field $\sigma(z_0, u_0, I_1, \ldots, I_k).$ Clearly, $u_k, z_k \in \cF_k$ and  $\E[M_{k + 1} |\cF_k] = 0.$ Hence, $\{M_k\}$ is again a martingale difference sequence with respect to the filtration $\{\cF_k\}.$ Furthermore, since $|M_{k + 1}| \leq |z_k| + p,$ we have 
\begin{equation}
\label{e:1TS.martingale.cond}
    \E[|M_{k + 1}|^2|\cF_k] \leq 2 (p^2 \vee 1)(1 + |z_k|^2).
\end{equation}

As before, let $a = p/(\Delta + p).$ Also, let $b = \Delta p/(\Delta + p)$ and $\epsilon_{k} = u_{k + 1} - u_k$ for $k \geq 0.$ It is then easy to see that one can write down \eqref{eqn:SAM_Est} in terms of the following two update rules:
\begin{align}
    u_{k + 1} = {} & u_k + \gamma_k [h(u_k, z_k) + M_{k + 1}] \label{e:u_Upd}\\
    z_{k + 1} = {} & z_k +  \beta_k [g(u_k, z_k) +  \epsilon_{k}] \label{e:z_Upd},
\end{align}
where $h: \Real^2 \to \Real$ and $g: \Real^2 \to \Real$ are the linear functions given by
\[
    h(u, z) = b - \omega u - az   \quad \text{ and }  \quad g(u, z) = u.
\]

\begin{theorem}
\label{thm:SAM_Est}
Consider the SAM estimator given in \eqref{eqn:SAM_Est} with $\zeta_k$ of the form given in \eqref{d:zeta_k}. Then $z_k \to \Delta$ a.s., if one of the following conditions holds true.

\begin{enumerate}
     \item \emph{One-timescale }: $\sum_{k \geq 0} \beta_k = \infty,$ \; $\sum_{k \geq 0} \beta_k^2 < \infty,$ \; and \; $ \beta_k = \gamma_k.$

    \item \emph{Two-timescale}: $\sum_{k \geq 0} \beta_k = \sum_{k \geq 0} \gamma_k = \infty,$ \; $\sum_{k \geq 0} \left(\beta_k^2 + \gamma_k^2 \right) < \infty,$ \; and \; $\lim_{k \to \infty} \dfrac{\beta_k}{\gamma_k} = 0.$
\end{enumerate}
In both these cases, recall that $\gamma_k = \eta_k/\beta_k.$
\end{theorem}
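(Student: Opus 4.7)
The plan is to recast the coupled recursion \eqref{e:u_Upd}--\eqref{e:z_Upd} as either a one- or two-timescale linear stochastic approximation scheme with martingale noise, and then invoke the corresponding convergence theorems. Stacking $\mathbf{w}_k = (u_k, z_k)^\tr$ with drift $\mathbf{F}(u, z) = (h(u, z), g(u, z))^\tr = (b - \omega u - a z,\, u)^\tr$, the Jacobian
\[
A = \begin{pmatrix}-\omega & -a\\ 1 & 0\end{pmatrix}
\]
has characteristic polynomial $\lambda^2 + \omega \lambda + a$; since $\omega, a > 0$, both roots lie in the open left half-plane (Routh--Hurwitz). Hence $\dot{\mathbf{w}} = \mathbf{F}(\mathbf{w})$ has $(0, \Delta)$ as its unique globally asymptotically stable (GAS) equilibrium.

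In the one-timescale regime (Case~1), with $\gamma_k = \beta_k$ and $\epsilon_k = \gamma_k[h(u_k, z_k) + M_{k+1}]$, the stacked scheme becomes
\[
\mathbf{w}_{k+1} = \mathbf{w}_k + \beta_k\bigl[\mathbf{F}(\mathbf{w}_k) + \tilde{\mathbf{M}}_{k+1}\bigr] + \beta_k^2\bigl(0,\, h(u_k, z_k) + M_{k+1}\bigr)^\tr,
\]
where $\tilde{\mathbf{M}}_{k+1} = (M_{k+1}, 0)^\tr$. I would verify the hypotheses of Theorem~\ref{thm:1TS.Conv}: the step-size conditions hold by assumption; $\tilde{\mathbf{M}}_{k+1}$ is an $\cF_k$-martingale difference whose conditional second moment grows at most linearly in $\|\mathbf{w}_k\|^2$ via \eqref{e:1TS.martingale.cond}; $\mathbf{F}$ is linear, hence Lipschitz; GAS of $(0, \Delta)$ was shown above; and the scaled drift $\mathbf{F}_\infty(\mathbf{w}) = A\mathbf{w}$ yields a scaled ODE whose origin is GAS by the same eigenvalue argument. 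The remaining $O(\beta_k^2)$ correction is summable by $\sum \beta_k^2 < \infty$ and acts as a vanishing perturbation.

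In the two-timescale regime (Case~2), $\beta_k/\gamma_k \to 0$ makes $u$ the fast variable and $z$ the slow one. Freezing $z$, the fast ODE $\dot u = b - \omega u - a z$ has the unique GAS equilibrium $\lambda(z) = (b - a z)/\omega$; substituting yields the slow ODE $\dot z = (b - a z)/\omega$ with unique GAS equilibrium $b/a = \Delta$. I would apply a standard two-timescale Borkar theorem after verifying: the noise bound \eqref{e:1TS.martingale.cond} for the fast recursion; that the slow perturbation $\beta_k \epsilon_k = \beta_k \gamma_k[h(u_k, z_k) + M_{k+1}]$ splits into a summable drift (since $\sum \beta_k \gamma_k < \infty$ by Cauchy--Schwarz applied to $\sum \beta_k^2 < \infty$ and $\sum \gamma_k^2 < \infty$) plus a square-summable martingale difference; and Borkar--Meyn stability for the scaled fast ODE (with $h_\infty(u; z) = -\omega u$) and for the scaled slow ODE (with $\lambda_\infty(z) = -az/\omega$).

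The principal obstacle will be obtaining a priori almost-sure boundedness of $(u_k, z_k)$: since $\E[M_{k+1}^2|\cF_k]$ grows like $1 + z_k^2$, one cannot appeal to Robbins--Monro-style arguments that demand uniformly bounded noise, and a Borkar--Meyn scaled-ODE stability argument becomes essential. In Case~1 this reduces to asymptotic stability of $\dot{\mathbf{w}} = A\mathbf{w}$; in Case~2, to the analogous stability of the fast and slow scaled ODEs. A secondary technical point is the careful bookkeeping of $\epsilon_k$ in \eqref{e:z_Upd}, which must be decomposed into an $O(\beta_k^2)$ perturbation in Case~1, or a vanishing drift plus a square-summable martingale-difference piece in Case~2. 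Once boundedness is in hand, the remainder follows the standard one- and two-timescale SA convergence templates.
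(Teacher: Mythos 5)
Your proposal is correct and follows essentially the same route as the paper: rewrite \eqref{eqn:SAM_Est} as the coupled linear recursions \eqref{e:u_Upd}--\eqref{e:z_Upd}, check the Hurwitz/GAS conditions for the stacked drift (one-timescale) and for the fast and slow ODEs (two-timescale), verify the Borkar--Meyn scaled-ODE stability conditions, and treat the extra term in the $z$-update as an asymptotically negligible perturbation of order $\gamma_k(1+\|u_k\|+\|z_k\|)$. The paper packages the perturbation handling into Propositions~\ref{prop:1TS.Conv.Perturbed} and \ref{prop:2TS.Conv.Perturbed} rather than your drift-plus-martingale splitting, but the substance is the same.
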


We state a few remarks concerning this result before discussing its proof.

\begin{remark}
\label{rem:SAM.Stepsizes}
 Examples of $\{\eta_k\}$ and $\{\beta_k\}$ sequences such that the above conditions are satisfied include the following. 
\begin{itemize}
    \item \emph{One-timescale}: $\beta_k = 1/(k + 1)^\beta$ with $\beta \in (1/2, 1]$ and $\eta_k = 1/(k + 1)^\eta$ with $\eta = 2 \beta.$
    
    \item \emph{Two-timescale}: $\beta_k = 1/(k + 1)^\beta$ with $\beta \in (1/2, 1]$ and $\eta_k = 1/(k + 1)^\eta$ with $\frac{1}{2} + \beta < \eta < 2 \beta.$
\end{itemize}
In either case, note that $\lim_{k \to \infty} \zeta_k = 1.$
\end{remark}

\begin{remark}
The justification for the names given above for the two sets of conditions is as follows. 
Under the first set of conditions, the update rules in \eqref{e:u_Upd} and \eqref{e:z_Upd} indeed behave like a one-timescale stochastic approximation algorithm, i.e., both $u_k$ and $z_k$ move on the same timescale. On the other hand, under the second set of conditions, \eqref{e:u_Upd} and \eqref{e:z_Upd}, it behaves like a two-timescale stochastic approximation algorithm. This is because $\beta_k$  decays to $0$ at a much faster rate than  $\gamma_k,$ in turn implying that the changes in $\{z_k\},$ i.e., $\{z_{k + 1} - z_k\}$ are of a smaller magnitude than that in $\{u_k\}.$
\end{remark}

\begin{remark}
In the spirit of the above remark, a natural question to consider is the following. Can one pick $\{\eta_k\}$ and $\{\beta_k\}$ so that $\eta_k/\beta_k^2 \to 0$ or, equivalently, $\gamma_k/\beta_k \to 0?$ That is, can one pick the stepsizes so that $u_k$ now becomes the slowly moving update relative to $z_k?$ The answer to this question seems to be no. This is because a couple of sufficient conditions needed to guarantee convergence (e.g., Condition iii.) and iv.) in Theorem~\ref{thm:2TS.Conv}) would no longer hold true in this new setup. Furthermore, simulations seem to suggest that the iterates, in fact, race to infinity. 
\end{remark}

\begin{remark}
Another question to consider is the following. Can one pick $\omega,$ $\{\beta_k\},$ and $\{\eta_k\}$ so that $\zeta_k \to \zeta,$ where $\zeta$ is a constant in $(0, 1)?$ In particular, can one choose $\omega = (1 - \zeta),$ $\beta_k = 1/(k + 1)^\beta$ with $\beta \in (1/2, 1]$ and then pick $\eta_k = 1/(k + 1)^\beta$ (i.e., $\eta = \beta$) so that $\zeta_k \to \zeta?$ The answer to this second question does not seem to be clear. This is because $\lim_{k \to \infty}\gamma_k$ would then equal $1.$ Consequently, again, one of the sufficient conditions to guarantee convergence (e.g., condition i.) of Theorem~\ref{thm:2TS.Conv}) would no longer hold. However, simulations in this case do show some promise. %
\end{remark}

\begin{remark}
\label{rem:ConvRate.SAM}
Based on the existing literature on convergence rates for one-timescale and two-timescale linear stochastic approximation \cite{dalal2018finite1TS, dalal2018finite2TS,dalal2019tale,kaledin2020finite}, one can conjecture that $\E|z_k - \Delta| = \tilde{O}(k^{-\beta/2})$ when $\{\beta_k\}$ and $\{\eta_k\}$ are chosen as described in Remark~\ref{rem:SAM.Stepsizes}. This implies the optimal convergence rate would then again be $\tilde{O}(1/\sqrt{k}),$ which matches the bound we have obtained in Theorem~\ref{thm:SA_Est} for the SA estimator. However, it is possible that this bound  may not be tight in the case of the SAM estimator. The is because \eqref{e:z_Upd} lacks the martingale difference term and, typically, these are the kind of terms that dictate the convergence rates. Furthermore, simulations in Section~\ref{sec4:numericalExpts} suggest that the SAM estimator always converges much faster than the SA estimator.
\end{remark}   

\begin{proof}[Proof of Theorem~\ref{thm:SAM_Est}]
We discuss the two cases one by one. 

\emph{One-timescale Setup}: In this case, the update rules given in \eqref{e:u_Upd} and \eqref{e:z_Upd}  together form a one-timescale stochastic approximation algorithm. More specifically, if we let $v_k = \begin{bmatrix}
u_k \\ z_k \end{bmatrix},$ then it follows that
\begin{equation}
    \label{eqn:SAM.1TS}
    v_{k + 1} = v_k + \beta_k \left(H(v_k) + \begin{bmatrix} 0 \\ \epsilon_{k} \end{bmatrix} +  \begin{bmatrix} M_{k + 1} \\ 0 \end{bmatrix} \right),
\end{equation}
where $H: \Real^2 \to \Real^2$ is the function defined by
\[
    H(v) = \begin{bmatrix} b \\ 0 \end{bmatrix} - \begin{bmatrix} 
    \omega & a \\ 
    -1 & 0 
    \end{bmatrix} v.
\]

We now verify the four conditions listed in  Theorem~\ref{thm:1TS.Conv} and then make use of Proposition~\ref{prop:1TS.Conv.Perturbed} (both given in the appendix) to show that $v_k \to \begin{bmatrix} 0 \\ \Delta \end{bmatrix} =: \vS$ a.s. This automatically implies $z_k \to \Delta$ a.s., which is what we need to prove. 

Notice that the stepsize in \eqref{eqn:SAM.1TS} is $\beta_k.$ Condition~i.), therefore, trivially holds due to the assumptions made in Statement 1. Next, observe that the martingale difference term in \eqref{eqn:SAM.1TS} is the vector $\begin{bmatrix} M_{k + 1} \\ 0 \end{bmatrix}.$ This, along with  \eqref{e:1TS.martingale.cond} and the statements above it, shows that  Condition ii.) is true as well.

With regards to Condition iii.), first note that $H$ is trivially Lipschitz continuous due to the linearity of both its component functions. Next, since $\Delta = b/a,$ we have that $H(v) = 0$ if and only if $v = \vS.$ Furthermore, since $a$ and $\omega$ are strictly positive, the real parts of the eigenvalues of the matrix in the definition of $H$ are also positive. This can be seen from the following set of observations. To begin with,  the associated characteristic equation of this matrix is 
\[
    \lambda^2 - \lambda \omega + a = 0.
\]
Hence, the roots are $\lambda = (\omega \pm \sqrt{\omega^2 - 4a})/2.$ If $\omega^2 < 4a,$ then the roots are complex valued; therefore, the real part of both these roots is $\omega/2$ which is clearly positive. On the other hand, if $\omega^2 \geq 4a,$ then both the roots are real; further, the smallest of the two roots, i.e., $(\omega - \sqrt{\omega^2 - 4a})/2,$ is strictly positive since $a > 0.$ This shows that the negative of the matrix given in the definition of $H$ is Hurwitz. Together, these observations show that $\vS$ is the unique globally asymptotically stable equilibrium of the ODE $\dot{v}(t) = H(v(t)).$ This verifies Condition iii.).

Finally, let 
\[
H_\infty(v) = -\begin{bmatrix} 
    \omega & a \\ 
    -1 & 0 
    \end{bmatrix} v.
\]
Then, it is easy to see that $H_c(v) \to H_\infty(v)$ uniformly on compact sets as $c \to \infty.$ 
Also, $H_\infty(v) = 0$ if and only if $v = 0.$ Furthermore, as shown before, the negative of the matrix in the definition of $H_\infty$ is Hurwitz. This implies that the origin is the unique globally asymptotically stable equilibrium of the ODE $\dot{v}(t) = H_\infty(v).$ This verifies condition iv.).

It now remains to check if $\{\epsilon_k\}$ has the decaying behaviour described in Proposition~\ref{prop:1TS.Conv.Perturbed}. Towards this, since $|M_{k + 1}| \leq (p + |z_k|),$ we have
\[
    \left\|\begin{bmatrix} 0 \\     \epsilon_k \end{bmatrix} \right\| \leq C' \gamma_k (1 + |u_k| + |z_k|) \leq  C \gamma_k (1 + \|v_k\|)
\]
for some constants $C, C' \geq 0.$ Now, because $\gamma_k$ decays to $0$ as $k \to \infty$ due to the assumption in Statement 1., it follows that $\{\epsilon_k\}$ indeed has the desired behaviour.

This completes the proof in the one-timescale setup.

\emph{Two-timescale  Setup}: Since $\beta_k/\gamma_k \to 0,$ one can perceive  $u_k$ to be changing on a faster timescale relative to $y_k.$ Hence, the update rules in \eqref{e:u_Upd} and \eqref{e:z_Upd} can  be viewed as a two-timescale stochastic approximation. We now verify the conditions listed in Theorem~\ref{thm:2TS.Conv} and then use Proposition~\ref{prop:2TS.Conv.Perturbed} (both given in the appendix) to conclude $z_k \to \Delta$ a.s. 

Conditions i.) and ii.) trivially hold. Hence, we only focus on verifying Conditions iii.) and iv.) Because of linearity, $h$ and $g$ are trivially Lipschitz continuous. Next, let $\phi(z) = (b - az)/\omega$ for $z \in \Real.$ Clearly, $\phi$ is linear in $z$ and, hence, Lipschitz continuous. Also, $h(\phi(z), z) = 0.$ This, along with the fact that the sign in front of $u$ in $h(u, z)$ is negative, shows that $\phi(z)$ is indeed the unique globally asymptotically stable equilibrium of the ODE $\dot{u}(t) = h(u(t), z).$ Next, observe that the ODE $\dot{z}(t) = g(\phi(z(t)), z(t))$ has the form $\dot{z}(t) = (b - a z(t))/\omega.$ Clearly, this ODE  has $\Delta$ as its  unique globally asymptotically stable equilibrium. This completes the verification of Condition iii.).

With regards to Condition iv.), first let $h_\infty$ be the function defined by $h_\infty(u, z) = -\omega u - a z.$ Also, for $z \in \Real,$ let $\phi_\infty(z) = -az/\omega.$ This function is linear in $z$ and, hence, Lipschitz; also, $\phi_\infty(0) = 0.$ Then, on the one hand, $h_c \to h_\infty$ uniformly on compacts as $c \to \infty$ and, on the other hand, the ODE $\dot{u}(t) = h_\infty(u(t), z)) = - \omega u(t) - a z$ indeed has $\phi_\infty(z)$ as its unique globally asymptotically stable equilibrium. Finally, for $z \in \Real,$ let $g_\infty(z) = - az/\omega.$ Then, trivially, $g_c \to g_\infty$ uniformly on compacts, as $c \to \infty.$  Further, $\dot{z}(t) = g_\infty(z(t)) = - a z(t)/\omega$ which indeed has the origin as its unique globally asymptotically stable equilibrium. With this, we finish with verifying Condition iv.).

Now, as per Proposition~\ref{prop:2TS.Conv.Perturbed}, we need to show that $\{\epsilon_k\}$ is asymptotically negligible. However, this is indeed true since $|M_{k + 1}| \leq (z_k + p)$ which implies $|\epsilon_k| \leq C \gamma_k (1 + |u_k| + |z_k|)$ for some constant $C \geq 0,$ and since $\gamma_k \to 0.$

This shows that $(u_k, z_k) \to (\phi(\Delta), \Delta) = (0, \Delta)$ a.s., as desired. 
\end{proof}

\subsection{Comparison with Existing Estimators}
\label{subsec:Comp}
As far as we know, there are three other approaches in the literature for estimating page change rates---the Naive estimator, MLE, and the MM estimator. The details about the first two estimators can be found in \cite{cho2003estimating} while, for the third one, one can look at \cite{upadhyay2019}. We now do a comparison, within the context of our setup, between these estimators and the ones that we have proposed. 

The Naive estimator simply uses the average number of changes detected to approximate the rate at which a page changes. That is, if $\{q_k\}$ denotes the iterates of the Naive estimator then, in our setup, $q_k = p\hI_k/k,$ where $\hI_k$ is as defined below \eqref{eqn:LLN_Est}. The intuition behind this is the following. If  $\tau_1$ is as defined at the beginning of Section~\ref{subsecLLN_Est}, then 
\begin{equation}
\label{e:Exp.N.tau1}
    \E[N(\tau_1)] = \Delta/p.
\end{equation}
Thus, the Naive estimator tries to approximate $\E[N(\tau_1)]$ with $\hI_k/k$ then use \eqref{e:Exp.N.tau1} to determine the change rate.  

Clearly, $ \E[q_k] = p \Delta/(\Delta + p) \neq \Delta.$ Also, from the strong law of large numbers, $q_k \overset{a.s.}{\to} p \Delta/(\Delta + p) \neq \Delta.$ Thus, this estimator is not consistent and is also biased. This is to be expected since this estimator does not account for all the changes that occur between two consecutive accesses.

Next, we look at the MLE estimator. Informally, this estimator identifies the parameter value that has the highest probability of producing the
observed set of observations. In our setup, the value of the MLE estimator is obtained by solving the following equation for $\Delta:$
\begin{equation}\label{LLE}
\sum_{j=1}^{k}  I_j\, \tau_j/(\exp{( \Delta\, \tau_j)} - 1) =  \sum_{j=1}^{k}  (1 - I_j)\, \tau_j,
\end{equation}
where $\tau_k = t_k - t_{k - 1}$ and $\{t_k\}$ is as defined in Section~\ref{sec2}. 
The derivation of this relation is given in \cite[Appendix C]{cho2003estimating}. As mentioned in \cite[Section 4]{cho2003estimating}, the above estimator is consistent.

Note that the MLE estimator makes actual use of the inter-arrival crawl times $\{\tau_k\}$ unlike our two estimators and also the Naive estimator. In this sense, it fully accounts for the information available from the crawling process. Due to this, as we shall see in the experiments section, the quality of the estimate obtained via MLE improves rapidly in comparison to the Naive estimator as the sample size increases.

However, MLE suffers in two aspects: computational tractability and mathematical instability. Specifically, note that the MLE estimator lacks a closed form expression. Therefore, one has to solve \eqref{LLE} by using numerical methods such as the Newton–Raphson method, Fisher’s Scoring Method, etc. Unfortunately, using these ideas to solve \eqref{LLE} takes more and more time as the number of samples grow. Also note that, under the above solution ideas, the MLE estimator works in an offline fashion. In that, each time we get a new observation, \eqref{LLE} needs to be solved afresh. This is because there is no easy way to efficiently reuse the calculations from one iteration into the next (note that the defining equation \eqref{LLE} changes in a significant and nontrivial way from one iteration to the other). % One reasonable alternative is to perform MLE estimation in a batch mode, i.e., wait until we gather a large number of samples and then apply one of the above-mentioned methods. % However, even then the computation time will be long when $k$ is large.

Besides the complexity, the MLE estimator is also unstable in two situations. One, when no changes have been detected ($I_j = 0, \, \forall k \in \{1, \ldots, k\}$), and the other, when all the accesses detect a change ($I_j = 1, \, \forall k \in \{1, \ldots, k\}$). In the first setting, no solution exists; in the second setting, the solution is $\infty.$ One simple strategy to avoid these instability issues is to clip the estimate to some pre-defined range whenever one of bad observation instances occur. 

Finally, let us discuss the MM estimator. Here, one looks at the fraction of times no changes were detected during page accesses and then, using a moment matching method, tries to approximate the actual page change rate. In our context, the value of this estimator is obtained by solving $\sum_{j = 1}^k (1 - I_j) = \sum_{j = 1}^k e^{- \Delta \tau_j}$ for $\Delta.$ The details of this equation are given in \cite[Section~4]{upadhyay2019}. While the MM idea is indeed simpler than MLE, the associated estimation process continues to suffer from similar instability and computational issues like the ones discussed above. 

We emphasise that none of our estimators suffer from any of the issues mentioned above. In particular, all of our estimators are online and have a significantly simple update rule;  thus, improving the estimate whenever a new data point arrives is extremely easy. Moreover, all of them are stable, i.e., the estimated values will almost surely be finite. More importantly,  the performance of our estimators is comparable to that of MLE. This can be seen from the numerical experiments in Section~\ref{sec4:numericalExpts}.

\subsection{Comparison of Theorem~\ref{thm:SAM_Est} with the Literature on Stochastic Momentum Methods}
\label{s:SAM.Literature}

We first provide an alternative characterization of \eqref{eqn:SAM_Est}. Let $f(z) = \frac{1}{2a}(az - b)^2,$ where $a$ and $b$ are as defined below \eqref{e:1TS.martingale.cond}, and let $h$ be as defined in Section~\ref{subsec:SA}. Then, clearly, $h(z) = -\nabla f(z).$ Thus, \eqref{eqn:SAM_Est} can be rewritten as 
\[
    z_{k + 1} = z_k + \eta_k[-\nabla f(z_k) + M_{k + 1}] + \zeta_k (z_k - z_{k - 1}),
\]
where $M_{k + 1}$ is as defined in \eqref{d:Mart.Diff.Seq}. Consequently, it follows that \eqref{eqn:SAM_Est} can also be viewed as an SGD method with a heavy-ball momentum term (similarly, \eqref{eqn:SA_Est} is also an SGD method, but we will not focus on that here).

The above viewpoint now brings forth an interesting question ``How does Theorem~3 compare with the existing results on stochastic heavy-ball method and the stochastic variant of Nesterov’s accelerated gradient method?"

While there are numerous results on stochastic momentum methods, surprisingly, most of them hold only under extremely restrictive assumptions: they either need
\begin{enumerate}
    \item that the gradient of the objective function be uniformly bounded \cite{gitman2019understanding, yang2016unified}, or
    
    \item that the noise sequence, i.e.,  $\{M_{n + 1}\},$ be independent of the iterates \cite{aybat2020robust} or, alternatively, its variance be uniformly bounded \cite{gitman2019understanding, yang2016unified, aybat2020robust, laborde2020lyapunov, assran2020convergence, kulunchakov2019estimate, can2019accelerated, cohen2018acceleration}.
\end{enumerate} 

In our setup, in contrast, the objective function $f$ is quadratic; hence, the magnitude of its gradient grows to infinity as $|z| \to \infty.$ Also, $\E[|M_{k + 1}|^2|\cF_k] = (z_k + p)^2 \E[I_{k + 1} - \Delta/(\Delta + p)]^2,$ which implies that $\E|M_{k + 1}|^2 = \frac{\Delta p}{(\Delta + p)^2} \E(z_k + p)^2.$ One can thus see that the above assumptions do not directly hold in our case. 

To the best of our knowledge, \cite{vaswani2019fast} and \cite{gadat2018stochastic} are the only other works that similarly do not need the above assumptions. The results in \cite{vaswani2019fast}, however, only apply to the setup with constant stepsizes. In that case, it is shown there that the iterates converge to a neighborhood of the desired solution but not to the solution itself. On the other hand, \cite{gadat2018stochastic} does discuss results on convergence and convergence rates of the stochastic heavy-ball method.
% The assumptions considered there indeed are general enough to cover the case we discuss here. 
The analysis there, though, does not apply to the stochastic variant of the original heavy-ball method, i.e., the one proposed in \cite[(9)]{polyak1964some}; instead, it applies to a different variant. 

The paper \cite{loizou2020momentum} is one other work on stochastic momentum methods that has recently generated significant attention. However, the results there concern a setup where the objective function is of a different nature to the one we consider here. In particular, instead of the gradient, it is assumed there that the objective function itself is defined via an expectation. 

In this sense, our work is the first to analyze the stochastic heavy-ball method (in its original form) without a priori presuming that the above two conditions hold. As a matter of fact, it is proved in \cite{gadat2018stochastic} that the variant which is considered there cannot be analyzed using the standard ODE based stochastic approximation techniques such as the one proposed in \cite[Chapter~6]{borkar2009stochastic}. Our analysis, in contrast, is able to directly make use of the standard approach. 

% Separately, there is also some work \cite{vaswani2019fast} that discusses the case where the stepsize sequences are constants. However, in that case, convergence of the iterates can only be guaranteed to a neighborhood of the desired solution and not to the solution itself.

% Result concerns the average of the iterates \cite{yang2016unified}

% In this work, in contrast, we consider the case where the algorithm in fact converges to the desired solution almost surely.

% Mention about the deterministic approaches.

\begin{figure}
  \begin{minipage}{.46\linewidth}
    \subfigure[Q-Q Plot: Real Data versus exponential distribution]
      {\includegraphics[width=\linewidth]{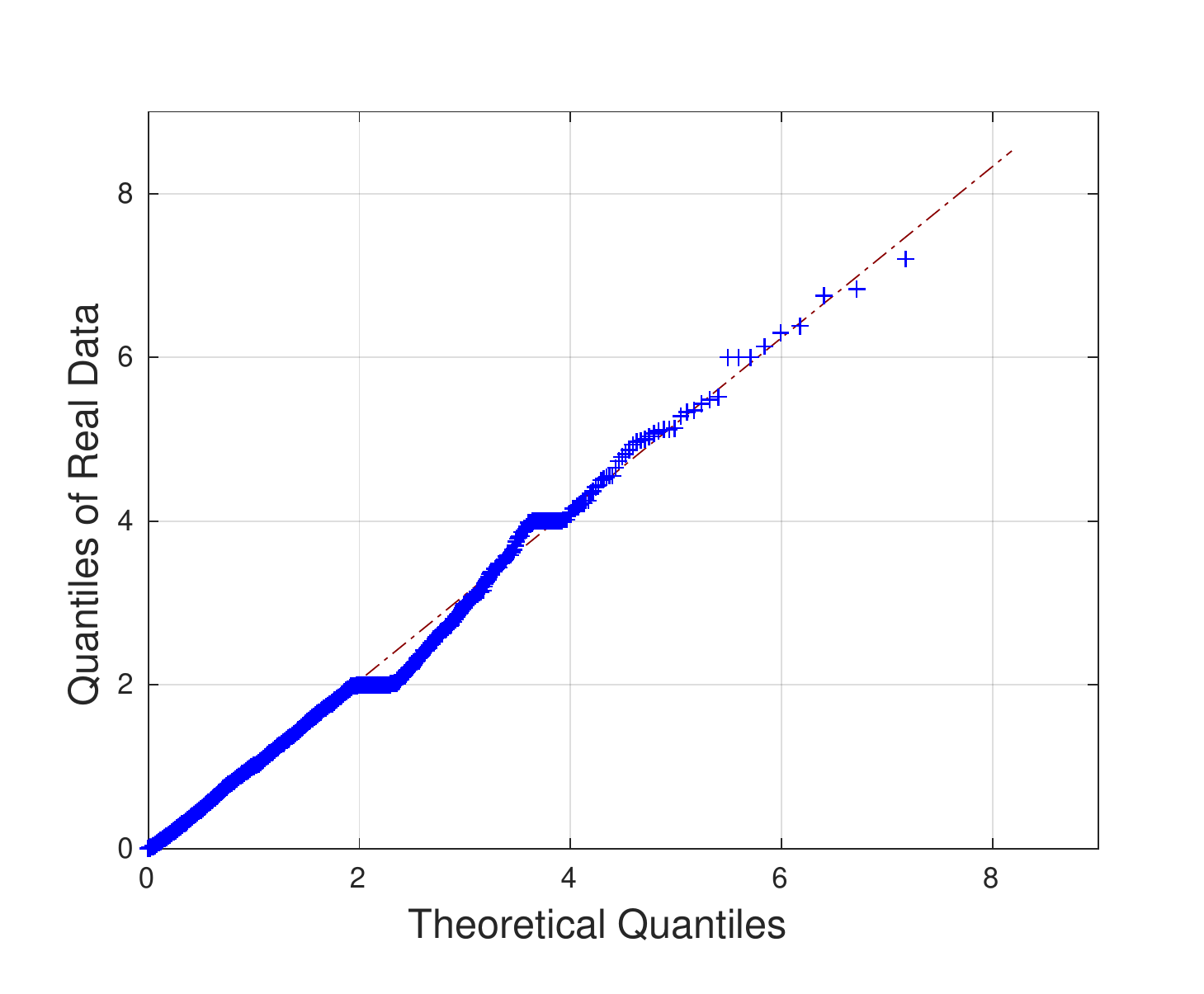}\label{Fig5a}}
     \subfigure[$\Delta = 1.10,\quad p = 0.5$]
      {\includegraphics[width=\linewidth]{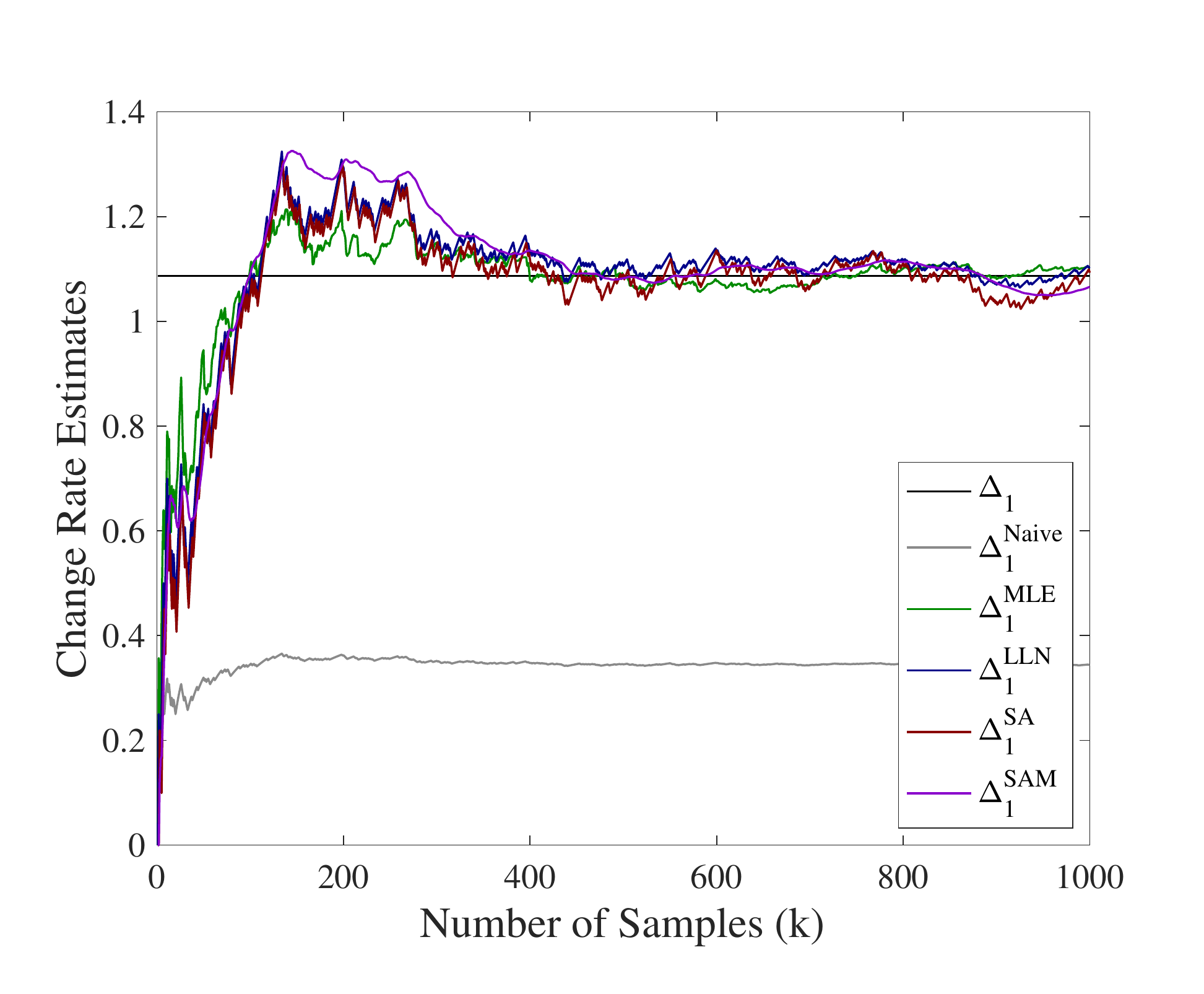}\label{Fig5b}}
     \subfigure[$\Delta = 1.10,\quad p = 0.1$]
      {\includegraphics[width=\linewidth]{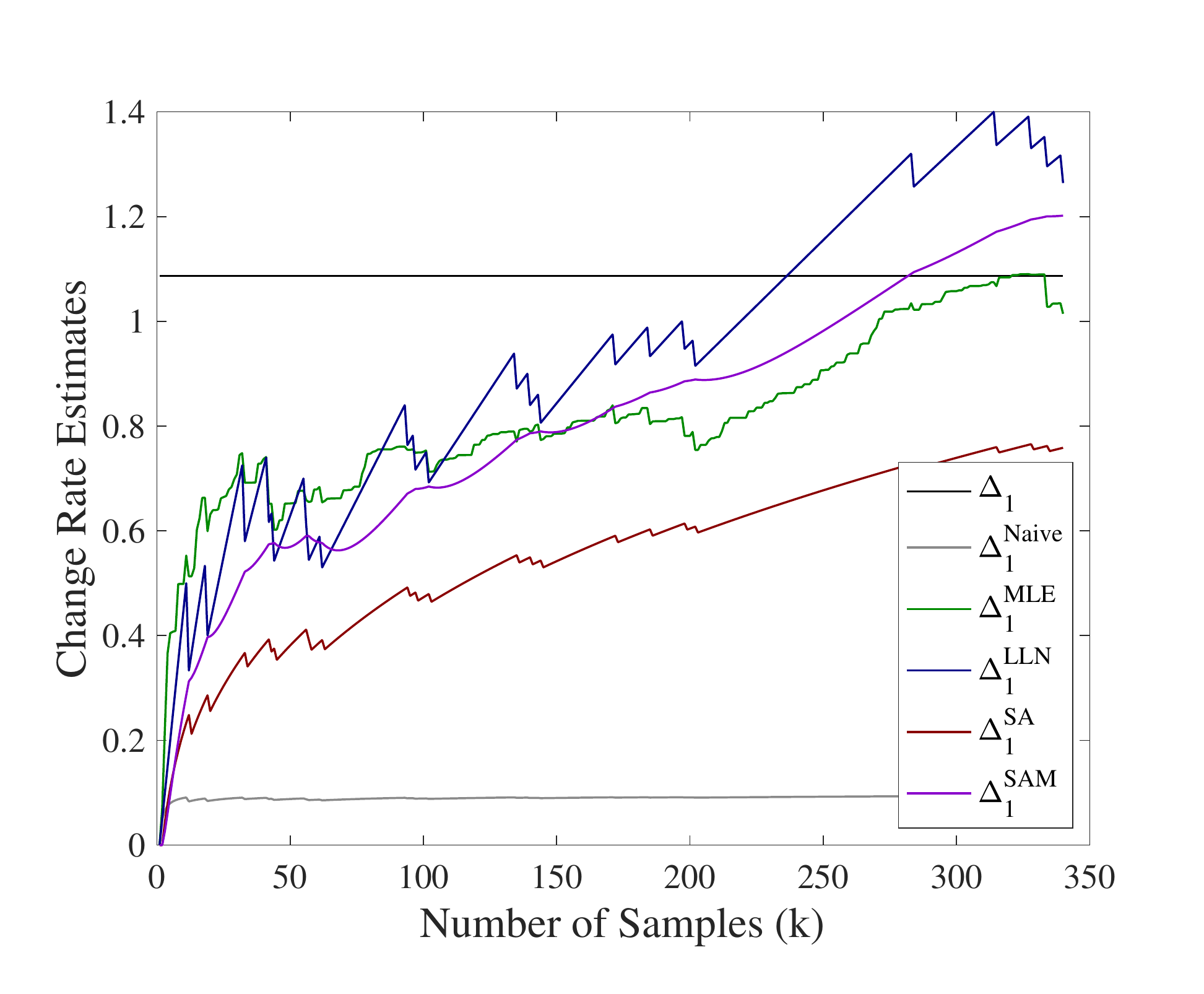}\label{Fig5c}}
    \caption{Different Estimators: Real Data}
    \label{Real Data}
  \end{minipage}\quad
  \begin{minipage}{.46\linewidth}
    \subfigure[Performance of single trajectories]
      {\includegraphics[width=\linewidth]{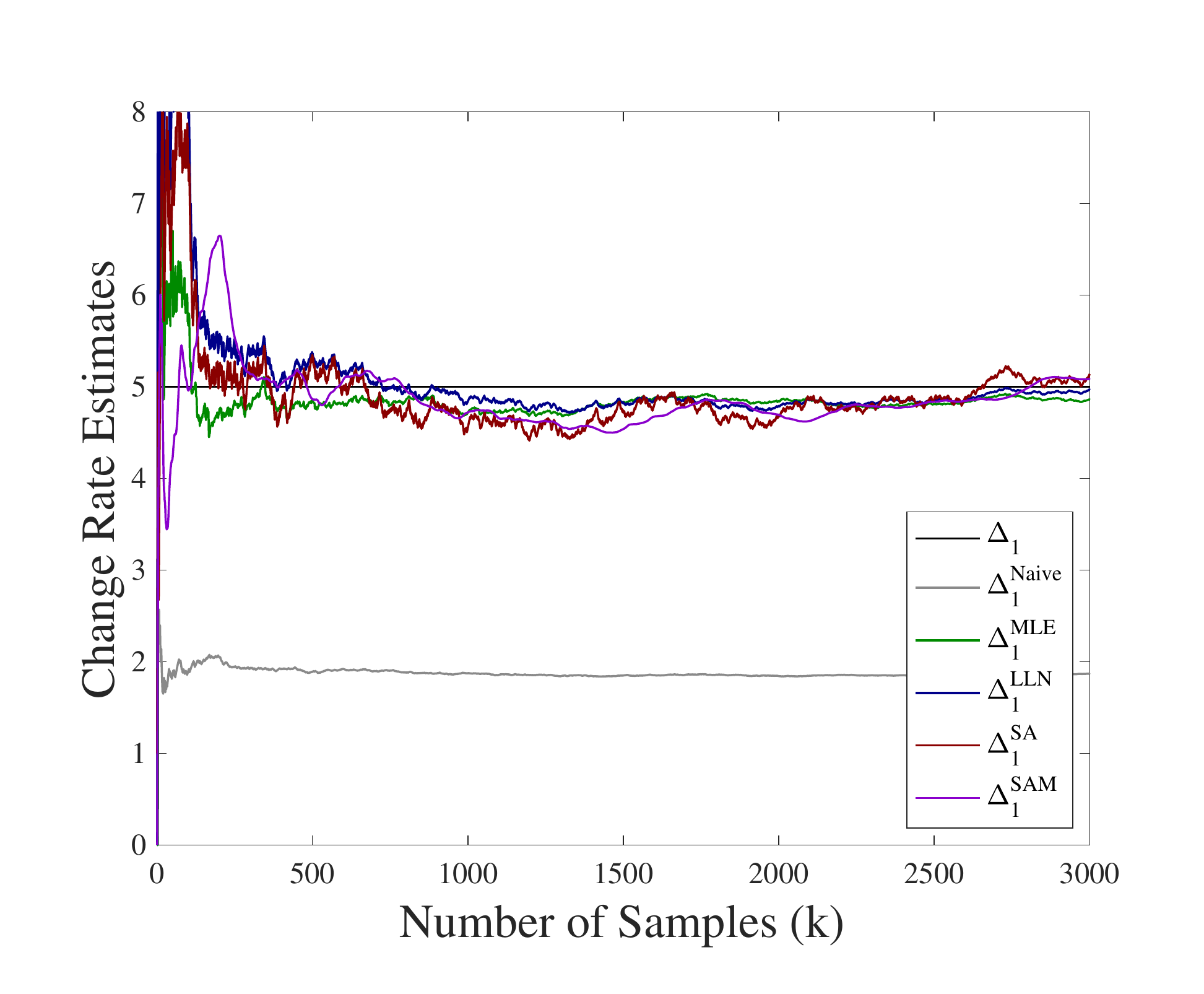}\label{Fig1a}}
     \subfigure[Mean estimate (solid) and $95\%$ Confidence interval]
      {\includegraphics[width=\linewidth]{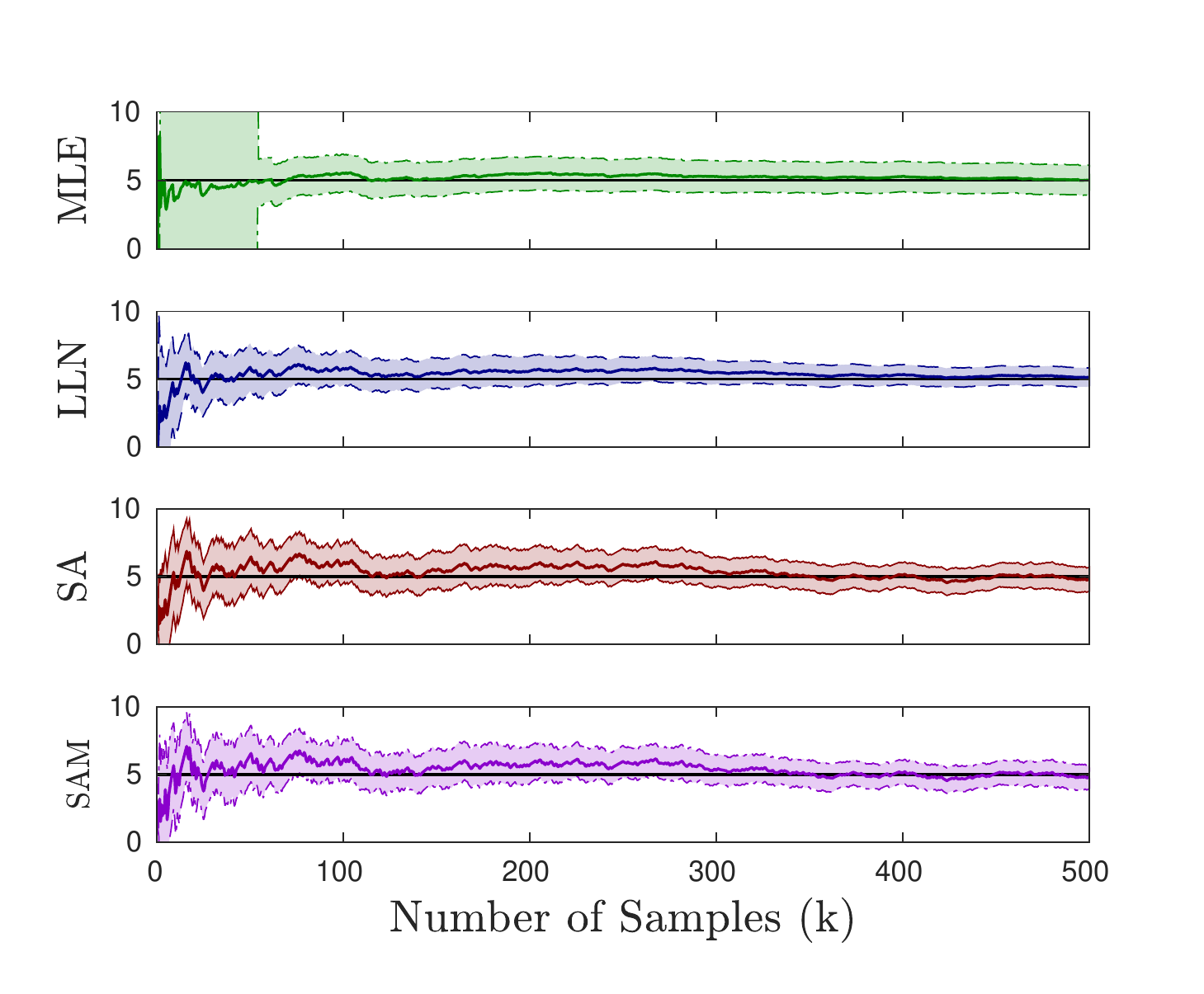}\label{Fig1b}}
     \subfigure[Root mean square error]
      {\includegraphics[width=\linewidth]{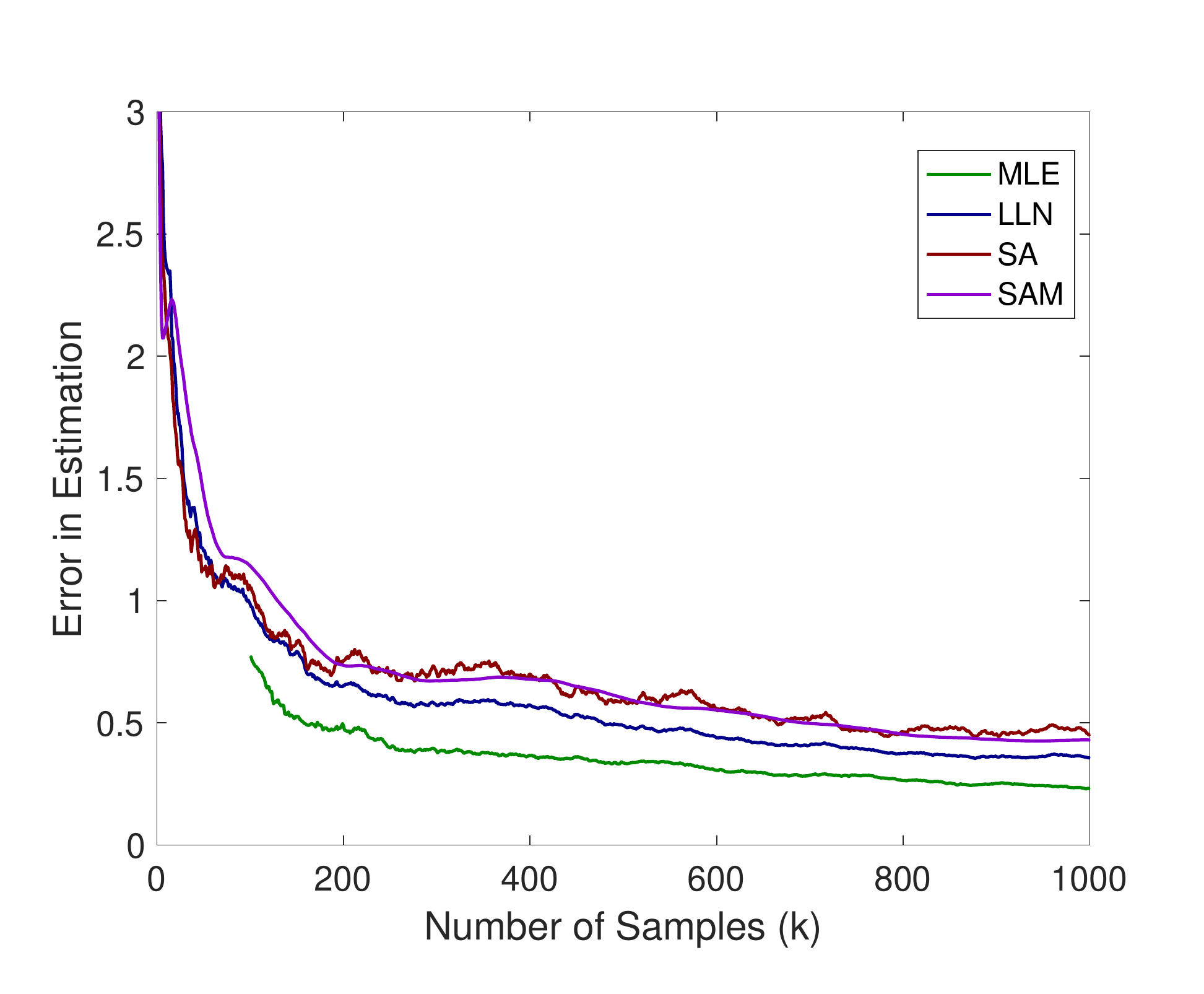}\label{Fig1c}}
    \caption{Synthetic data: $\Delta = 5,\, p = 3$.}
    \label{fig2: change_rates_comparison}
  \end{minipage}
\end{figure}

\section{Numerical Results} 
\label{sec4:numericalExpts}
We now demonstrate the strength of our estimators using three different experiments. The first one involves real data based on Wikipedia traces. It serves two of our goals. First, we use this experiment to validate our model assumption that the page change process is a stationary Poisson point process. Second, we use it to demonstrate that the estimation quality of our online estimators is comparable to that of the offline MLE estimator. In the second experiment, using synthetic data, we study the impact of $\Delta$ and $p$ on our three estimators. In the third experiment, we similarly study how the choices of $\{\alpha_k\},$ $\{\eta_k\}$ and $\{\beta_k\}$ influence the performance. Finally, based on the outcomes of these experiments, we provide some guidelines on which estimator to use in practice. 

%We highlight at the outset that we work with synthetic data.

\subsection{Performance on Real Data (Expt. 1)}

As mentioned before, our goal here is provide a validation for our model as well as to compare the performance of the different estimators on real data.

To generate the data set, we used Wikipedia traces which are openly available on the web. In particular, we selected an arbitrary page from the list of frequently edited pages on Wikipedia. The title of the page we chose was `Template talk: Did you know". Next, we extracted the timestamps at which this page was edited over a period of five months (April $01,\,2020$ to August $31,\,2020$). We found that this page had changed $4043$ times during this period. From the available history,  we then calculated the inter-update times of the page change process. The average of these values turned out to be $\Delta = 1.1098.$

Using a Q-Q plot, we then compared the distribution (specifically quantiles) of the collected data to that of an exponential distribution with this $\Delta$ rate. The result is given in Fig.~\ref{Fig5a}. Notice that the points roughly fall on a straight line. Importantly, this line is very close to the $45^{\circ}$ diagonal. This implies that both the sets of quantiles come from the same distribution, thereby confirming that the collected inter-update times indeed follow an exponential distribution whose rate is close to $\Delta.$ Equivalently, this implies that the update times come from a Poisson point process with rate close to $\Delta.$

%The histogram in the figure represents real data and the fitting curve represents the exponential distribution with parameter $\Delta = 1.1098$. The parameter of the exponential distribution is estimated using maximum likelihood.

Having verified our assumption, we now compare five different page rate estimators: Naive, MLE, LLN, SA, and SAM. Their performances are given  in Fig~\ref{Fig5b} and Fig~\ref{Fig5c}.

The procedure we adopted to obtain these plots was as follows. (Unless specified, we follow the notations from Section~\ref{sec2}). Recall that we had access to the actual timestamps at which this Wikipedia page was  changed. Keeping this in mind, we artificially generated the crawl instances of this page. These times were sampled from a Poisson point process with rate $p=0.5$ for Fig~\ref{Fig5b} and with $p=0.1$ for Fig~\ref{Fig5c}. We then checked if the page had changed or not between each of the successive crawling instances. This then generated the values of the indicator sequence $\{I_k\}.$ For $p = 0.5,$ the length of this sequence was $1723$ while, for $p = 0.1,$ this length turned out to be $340.$ Using these $I_k,$ $p,$ and inter-update time lengths, we then used the five different estimators mentioned above to find $\Delta.$  This gave rise to the trajectories shown in Fig~\ref{Fig5b} and Fig~\ref{Fig5c}.  Note that the depicted trajectories correspond to exactly one run of each estimator. The trajectory of the estimates obtained by the SA estimator is labeled $\Delta^{SA},$ etc. The stepsizes chosen for our different estimators are as follows.  For our LLN estimator, we had set $\alpha_k \equiv 1$ and, for the SA estimator, we had used $\eta_k = (k + 1)^{-\eta}$ with $\eta = 0.75$. In case of the SAM estimator, we had set $\beta_k = (k + 1)^{-\beta}$ with $\beta = 0.6$ and $\eta_k = (k + 1)^{-\eta}$ with $\eta = 1.2$. (Recall that, in the SAM estimator, the main stepsize is $\eta_k$ while the stepsize multiplying the momentum term has the form $\zeta_k = (\beta_k - \omega \eta_k)/\beta_{k - 1}$).

We now summarise our findings. In Fig~\ref{Fig5b}, we observe that performances of the MLE, LLN, SA and SAM estimators are comparable to each other and all of them outperform the Naive estimator. This last observation is not at all surprising since the Naive estimator completely ignores the changes missed between two successive crawling instances. In contrast to this, we observe that the estimators behave somewhat differently in Fig~\ref{Fig5c}. Recall that the crawling frequency here is $0.1,$ which is quite small compared with the value $0.5$ that was chosen before. We notice that SAM and MLE estimators perform better than SA and LLN estimators in this scenario.

% Note that the sample size is very low. From our experiments on the synthetic dataset in the subsequent experiments, we observe that SA and LLN estimator show significant improvement as the sample size grows.    

% We close this experiment with an important remark. The actual change rate for web pages can not be known in the real settings if complete history is unavailable (which is often a case). Then, we do not have any baseline to compare the performance of different estimators. One way is to crawl the page with a very high frequency. In this case, the change is detected approximately at the same time. For example, each page can be crawled approximately once an hour and then test different estimators by setting the crawling frequency to once a day or once a week. However, one should account for the politeness constraint as crawling the web page with very high frequency can result in a permanent block from the server. For further experiments, we work on synthetic data. The details on its generation are present in the next experiment.

\begin{figure}
  \begin{minipage}{.46\linewidth}
    \subfigure[Performance of single trajectories]
      {\includegraphics[width=\linewidth]{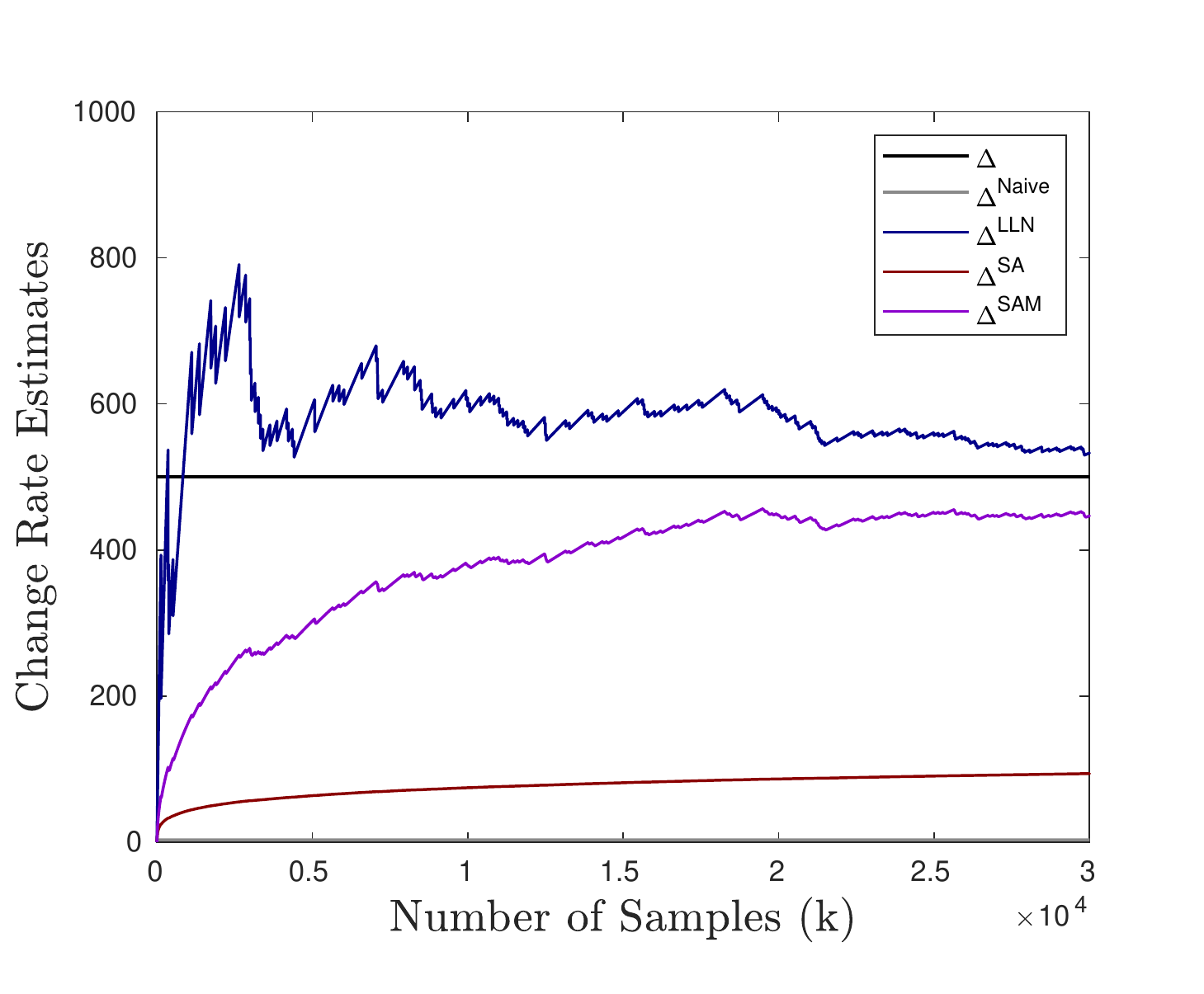}\label{Fig2a}}
     \subfigure[Mean estimate (solid) and $95\%$ Confidence interval]
      {\includegraphics[width=\linewidth]{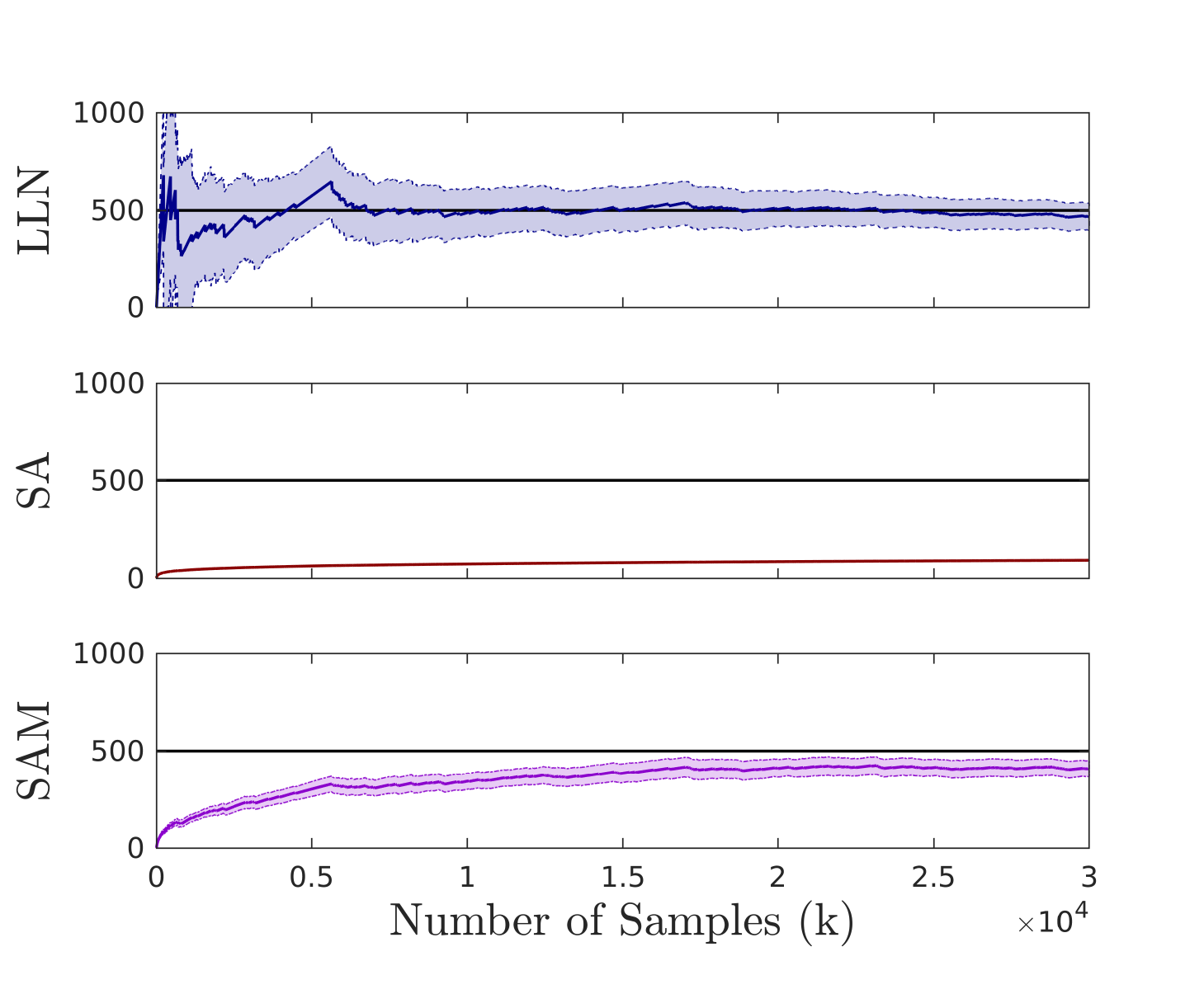}\label{Fig2b}}
     \subfigure[Root mean square error]
      {\includegraphics[width=\linewidth]{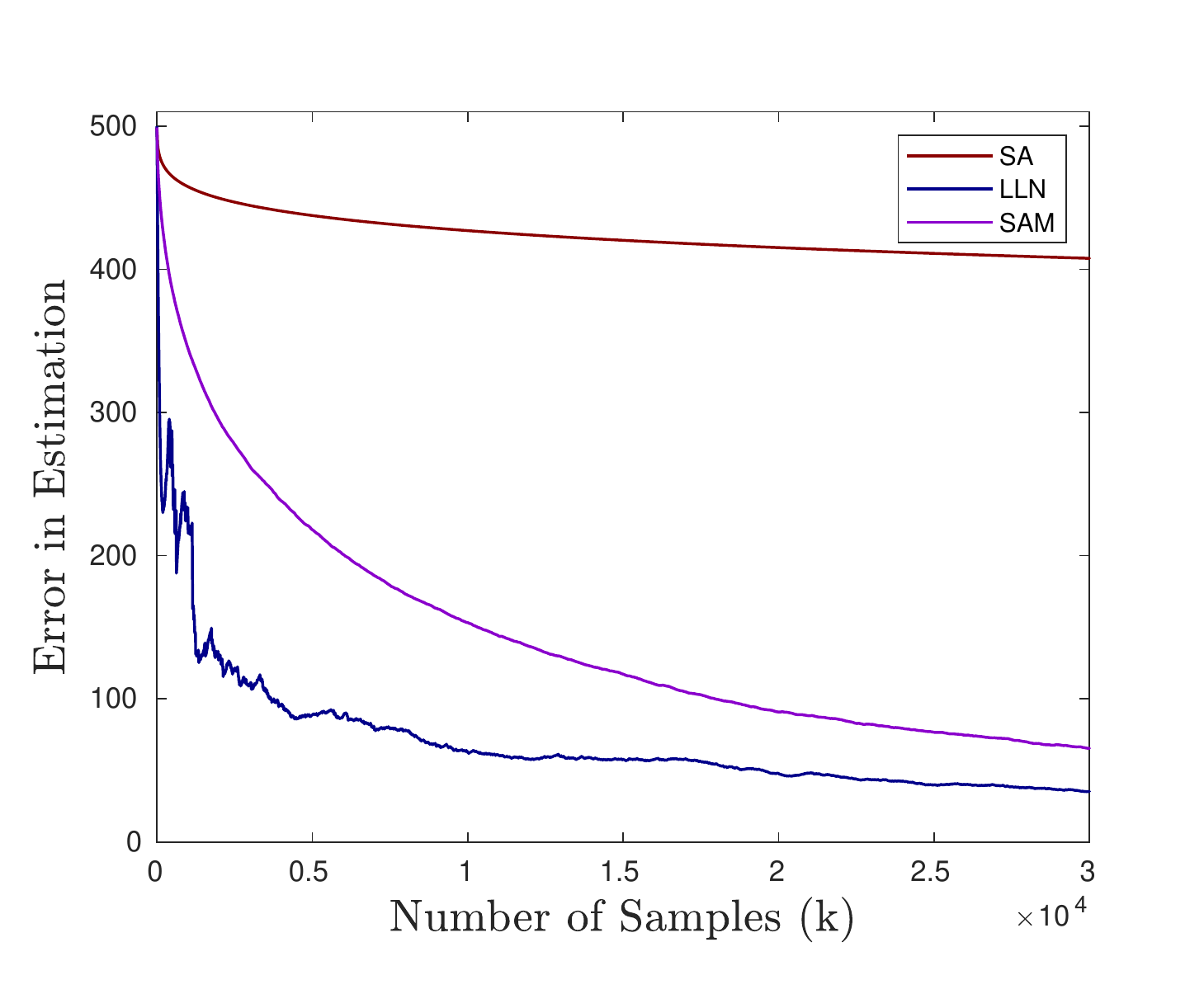}\label{Fig2c}}
    \caption{Synthetic data: $\Delta = 500,\, p = 3$.}
    \label{fig3: change_rates_comparison}
  \end{minipage}\quad
  \begin{minipage}{.46\linewidth}
    \subfigure[Performance of single trajectories]
      {\includegraphics[width=\linewidth]{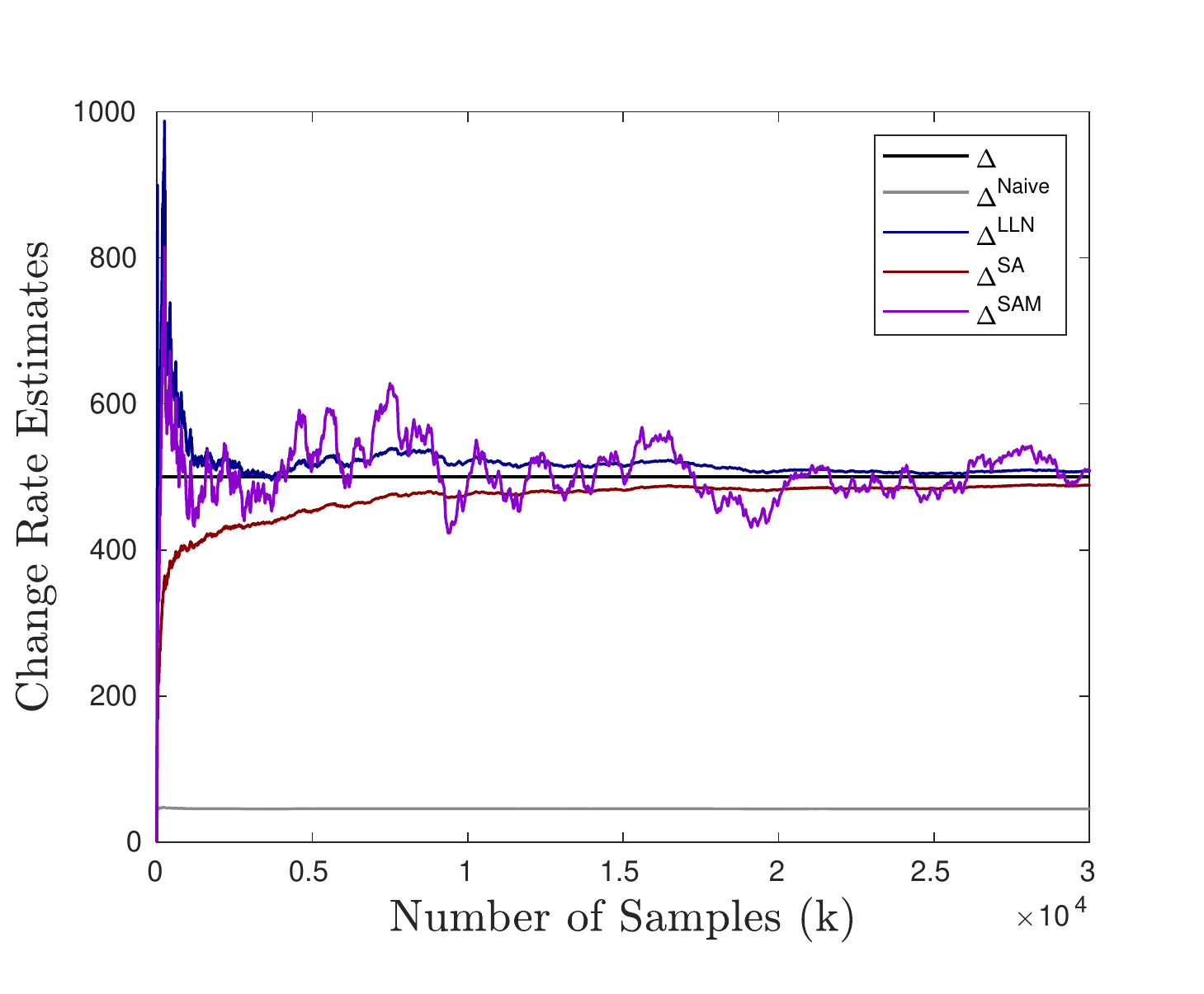}\label{Fig3a}}
     \subfigure[Mean estimate (solid) and $95\%$ Confidence interval]
      {\includegraphics[width=\linewidth]{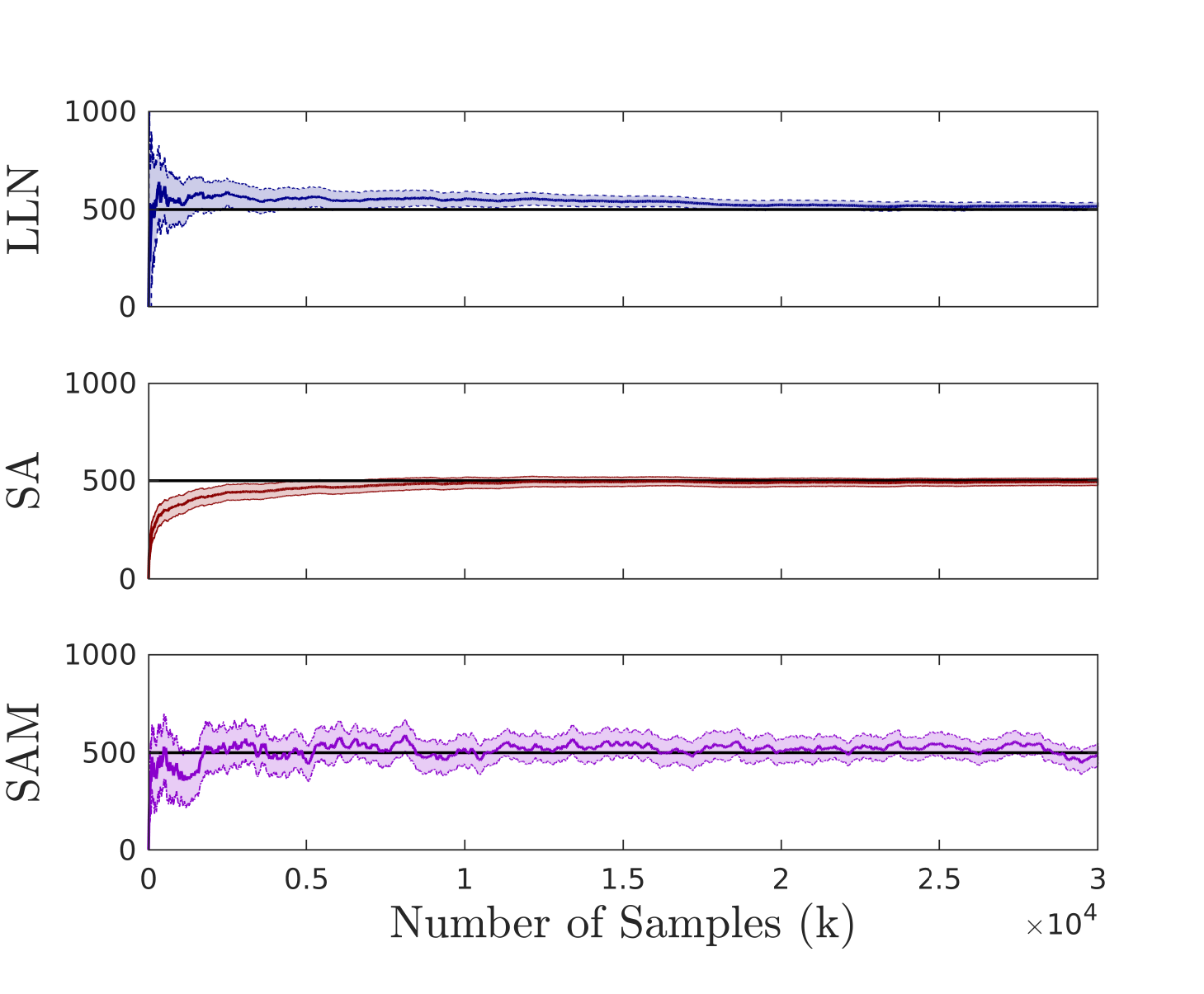}\label{Fig3b}}
     \subfigure[Root mean square error]
      {\includegraphics[width=\linewidth]{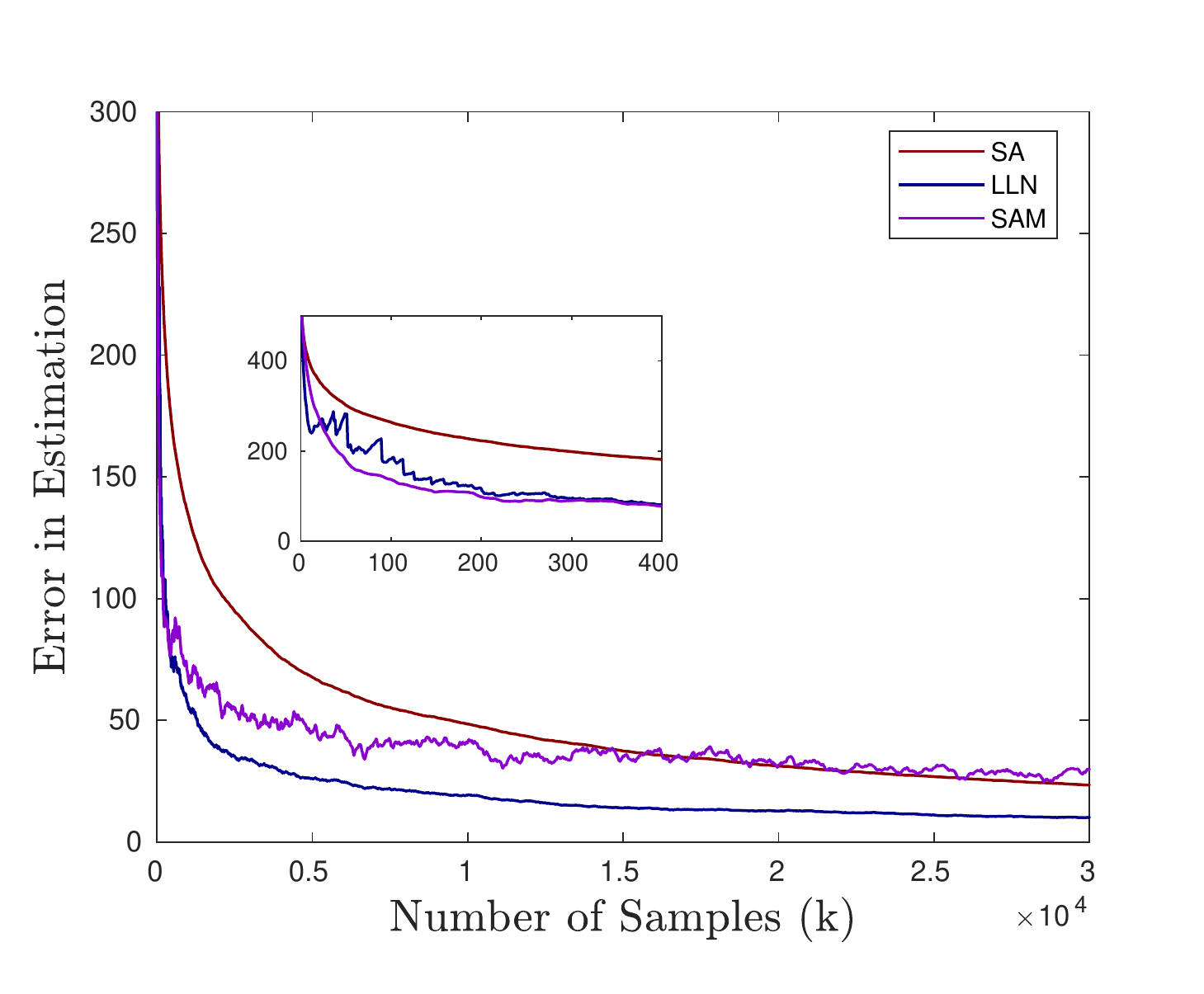}\label{Fig3c}}
    \caption{Synthetic data: $\Delta = 500,\, p = 50$.}
    \label{fig4: change_rates_comparison}
  \end{minipage}
\end{figure} 

\subsection{ Comparison of Estimation Quality using Synthetic Data (Expt. 2)}

Throughout this experiment, we work with synthetic data.

\subsubsection{Sample Variance and Root Mean Squared Error}

Our goal here is to study the sample variance and root mean squared error of the estimates obtained from multiple runs of the different estimators. The output is given in Fig.~\ref{fig2: change_rates_comparison}.

The data for this experiment is generated as follows. We sample points from two different stationary Poisson point processes, one with parameter $\Delta = 5$ and the other with parameter $p = 3.$ We treat the samples from the first process as the times at which an imaginary page changes, and the samples from the second process as the times at which this page is crawled. We then check if the page has changed or not between two successive page accesses. This information is then used to generate the values of the indicator sequence $\{I_k\}.$ 

We now give $\{I_k\},$ $p,$ as well as the inter-access lengths as input to the five different estimators mentioned before. The stepsizes we use are as follows.  For our LLN estimator, we set $\alpha_k \equiv 1$; for the SA estimator, we use $\eta_k = (k + 1)^{-\eta}$ with $\eta = 0.75$; and, for the SAM estimator, we choose $\zeta_k = (\beta_k - \omega \eta_k)/\beta_{k - 1},$ where $\eta_k = (k + 1)^{-\eta}$ with $\eta = 1.3$, $\omega = 1,$ and $\beta_k = (k + 1)^{-\beta}$ with $\beta = 0.75$. Fig.~\ref{Fig1a} depicts one single run of each of the five estimators. 

In Fig.~\ref{Fig1b} and Fig.~\ref{Fig1c}, the parameter values are exactly the same as in Fig.~\ref{Fig1a}. However, we now run the simulation $100$ times; the page change times and the page access times are generated afresh in each run. Fig.~\ref{Fig1b} depicts the $95\%$ confidence interval of the obtained estimates, whereas Fig.~\ref{Fig1c} shows the root mean squared value of the difference between the estimated value and actual change rate of the page.

We now summarize our findings. Clearly, in each case, we observe that performances of the MLE, LLN, SA and SAM estimators are comparable to each other and all of them outperform the Naive estimator. The fact that the estimates from our approaches are close to that of the MLE estimator was indeed quite surprising to us. This is because, unlike MLE, our estimators completely ignore the actual lengths of the intervals between two accesses. Instead, they use $p,$ which only accounts for the mean interval length. Note that the root mean square error   of the first few samples for MLE is very high (hence, it is not depicted in Fig.~\ref{Fig1c}). This is due to the instability that MLE faces; see  Section~\ref{subsec:Comp}. 
Fig.~\ref{Fig1c} shows that the error in the MLE estimate decays faster as compared to others. We believe this is because the MLE also uses the actual interval lengths in its computation; thus, it uses more information about the crawling process than the other estimators. 

While the plots do not show this, we once again draw attention to the fact that the time taken by each iteration in MLE rapidly grows as $k$ increases. In contrast, our estimators take roughly the same amount of time for each iteration.

\subsubsection{Impact of $\Delta$ and $p$ on Performance} 

In the previous experiments, recall that our different estimators more or less behaved similarly. Our goal now is to vary the values of $\Delta$ and $p$ and see if there are any major differences that crop up in their performances. Alongside, we also wish to see the usefulness of the momentum term used in the SAM estimator. The performances in two such interesting scenarios are shown in Fig.~\ref{fig3: change_rates_comparison} and Fig.~\ref{fig4: change_rates_comparison}. Note that we no longer consider  MLE on account of their impractical run times when the $\{I_k\}$ sequence lengths are large. 

In Fig.~\ref{fig3: change_rates_comparison}, $\Delta = 500$ and $p = 3,$ which means the crawling frequency is quite low compared to the frequency at which the page is updated. On the other hand, in Fig.~\ref{fig4: change_rates_comparison}, $\Delta = 500$ and $p = 50;$ thus, the crawling frequency now is relatively higher. The stepsizes for our different estimators are as follows. For the LLN estimator, we chose  $\alpha_k \equiv 1$; for the SA estimator, we chose $\eta_k = (k + 1)^{-\eta}$ with $\eta = 0.8$; and, for the SAM estimator, we chose $\eta_k$ as before, $\omega = 1$ and $\beta_k = (k + 1)^{-\beta}$ with $\beta = 0.5$ (note that our stepsize choice for the SAM estimator violates the conditions of Theorem~\ref{thm:SAM_Est}, but it satisfies the one we made in the conjecture below \eqref{eqn:SAM_Est}).

Fig.~\ref{Fig2a} and  Fig.~\ref{Fig3a} show one  single trajectory of our estimators in the two scenarios. We observe that the LLN and SAM estimators perform quite well as compared to the SA estimator in both the scenarios; however, the latter catches up when the $p$ value becomes higher. The impact of the momentum term can also be clearly seen in the low frequency crawling case. In this scenario, note that the crawler will more or less  always detects a change. That is, the $\{I_k\}$ sequence will mostly consists of all $1$s. In turn, this means that the SA estimator's update rule will almost always have the form $y_{k + 1} = y_k + \eta_k p.$

We then ran the simulation $100$ times and obtained a plot of the $95\%$ confidence interval and the root mean squared error of our different estimators in the two scenarios. This is shown in  Fig.~\ref{Fig2b}, \ref{Fig2c}, \ref{Fig3b}, and \ref{Fig3c}. We observe that variance for SA is relatively very low. This is because the SA estimator does not deviate too much from the update rule mentioned in the previous paragraph. The disadvantage, however, is that its estimates typically are quite far away from the actual change rate. Furthermore, this error decreases quite slowly. Another interesting observation from  Fig.~\ref{Fig2b} and \ref{Fig2c} is that the variance of LLN estimator is larger than that of SAM estimator, however, its error decays at much faster rate than that of the SAM estimator.

Compared to Fig.~\ref{fig3: change_rates_comparison}, notice that in Fig.~\ref{fig4: change_rates_comparison} that performance of all our estimators improve . However, as shown in Fig.~\ref{Fig3b}, the SAM estimator is quite volatile now. Separately, the zoomed-in plot in \ref{Fig3c} shows that the average error for the SAM estimator drops quite rapidly compared to others in the initial few iterations. However, this advantage disappears after $400$ iterations; then on the LLN estimator performs much better. 

\subsection{Impact of Step Size Choices (Expt.3)}
The theoretical results presented in Section~\ref{sec3} show that the convergence rates of LLN, SA, and SAM estimators are affected by the choice of $\{\alpha_k\}$ $\{\eta_k\},$ and $\{\zeta_k\},$ respectively.  Figure~\ref{figv} provides a numerical verification of the same. The details are as follows.  We chose $\Delta = 500$ and $p = 10.$ Notice that the page change rate is again very high, whereas the crawling frequency is relatively very low value. We then use the LLN estimator with three different choices of $\{\alpha_k\};$ these choices are shown in the Fig~\ref{Fig4a} itself. The LLN estimator with $\alpha_k = k^{0.75}$ has the worst performance. This behavior matches the prediction made by Theorem~\ref{thm:LLN_Est}. In Fig.~\ref{Fig4b}, we again consider the same setup as above. However, this time we run the SA estimator with three different choices of $\{\eta_k\};$ the choices are given in the figure itself. We see that the performance for $\eta = 0.5$ is better than the other cases. 

We now analyze the impact of varying $\{\eta_k\}$ and $\{\zeta_k\}$ on the performance of the SAM estimator. Let $\zeta_k$ be of form given in \eqref{d:zeta_k}. Based on our conjecture below \eqref{eqn:SAM_Est}, pick $\eta_k = (k + 1)^{-\eta}$ and $\beta_k = (k + 1)^{-\beta}$ with $\beta \in (0, 1]$ and $\beta < \eta < 2\beta.$ In Fig.~\ref{Fig4c}, we fix $\eta = 0.8$ and vary $\beta$; these choices are shown in the figure itself. The SAM estimator with $\beta = 0.4$ reaches the limit very quickly, however, it is very noisy and keeps fluctuating around actual change rate. The fluctuations reduce as the value of $\beta$  increases; however, larger values of $\beta$ also slow down the rate at which the error decreases. We observe that the SAM estimator with $\beta  = 0.6$ has the best performance. In Fig.~\ref{Fig4d}, we fix $\beta = 0.6$ and vary $\eta$. The figure seems to suggest that a  larger $\eta$ increases the convergence rate but, simultaneously, also increases the fluctuations. 

\begin{figure}[ht!]
\centering
\subfigure[LLN estimator for different $\{\alpha_k\}$ choices] {\includegraphics[width=0.5\textwidth]{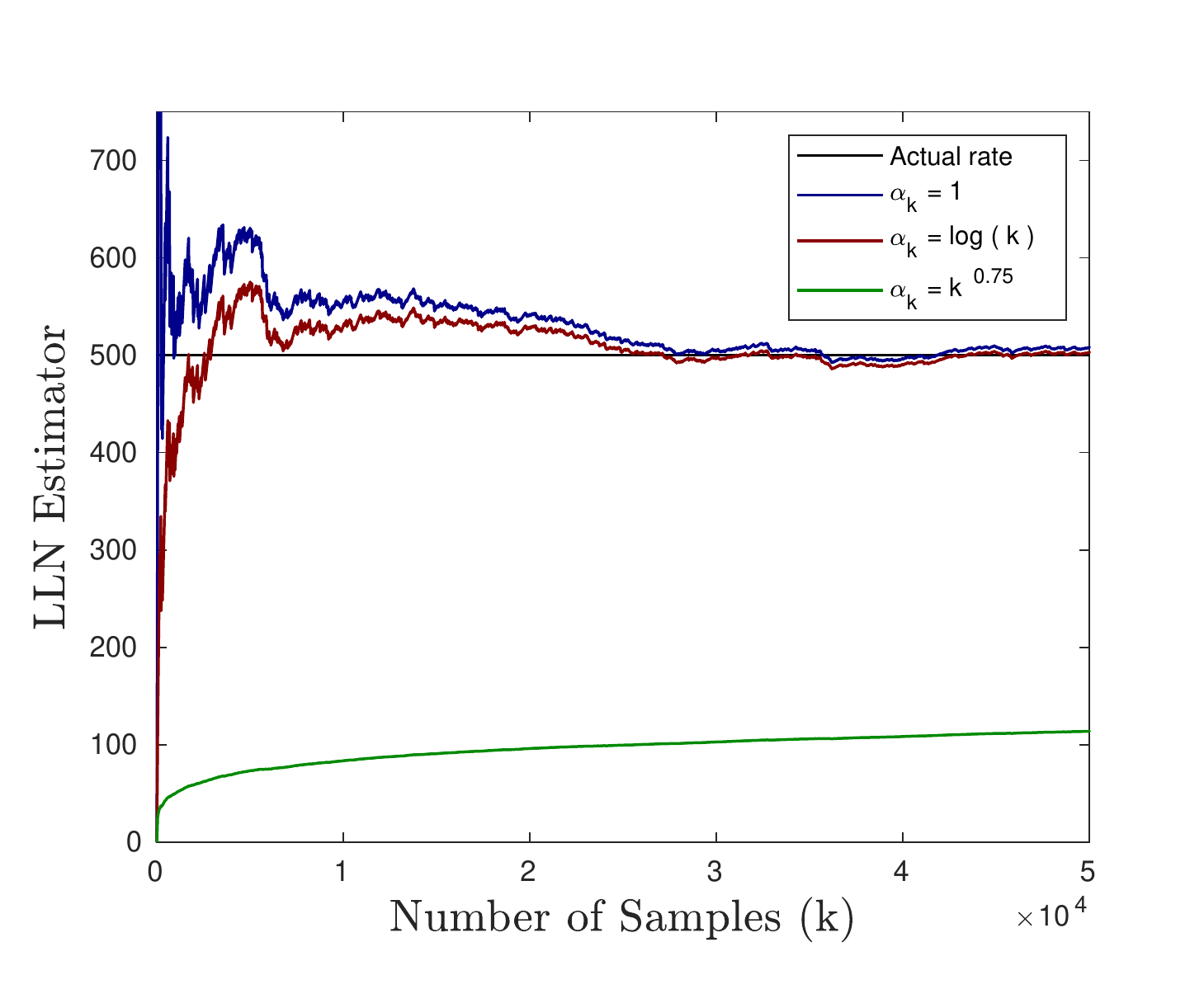}\label{Fig4a}}\hspace{-1em}
\subfigure[SA estimator with $\eta_k = (k+1)^{-\eta} $for different $\eta$ choices] {\includegraphics[width=0.5\textwidth]{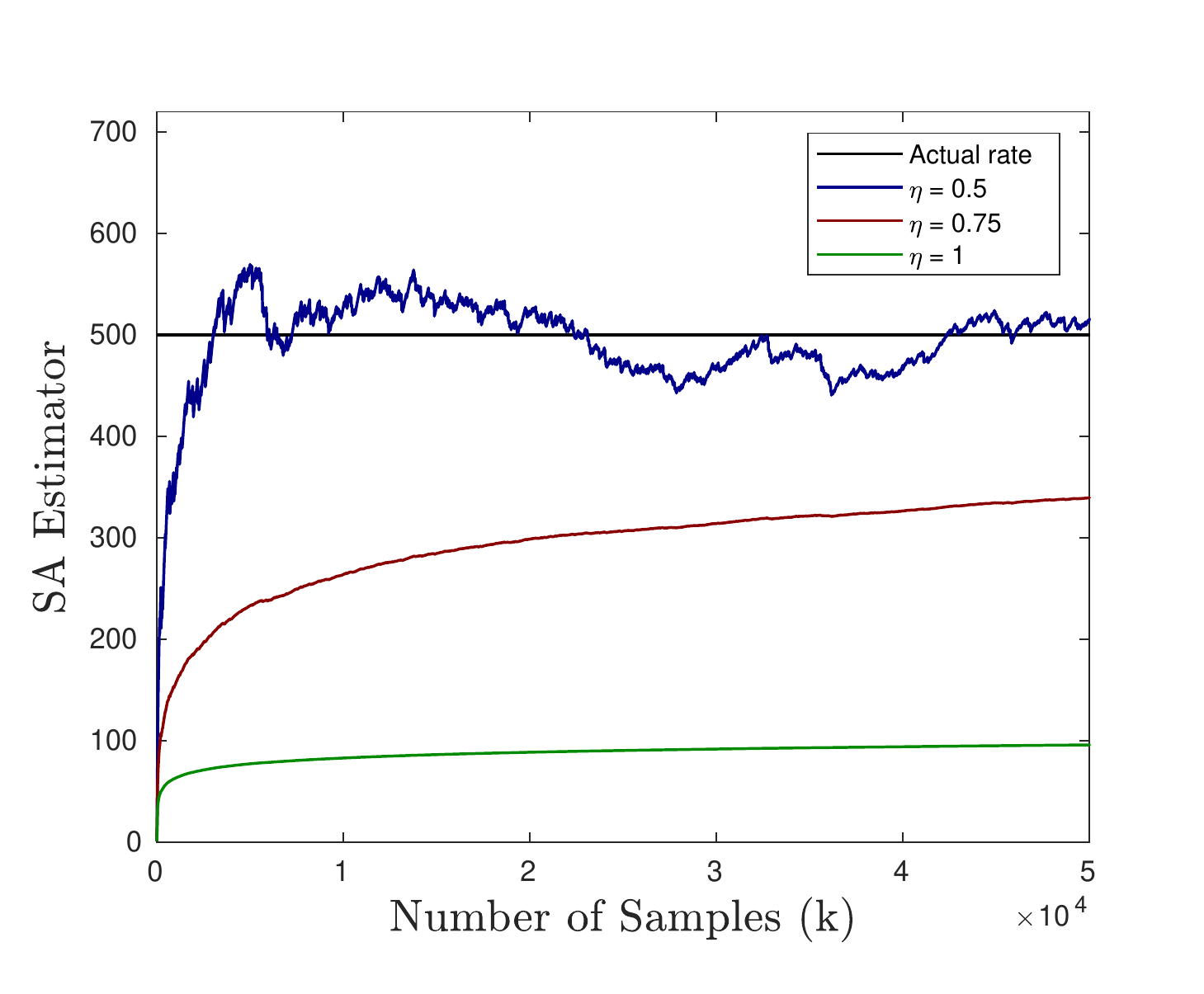}\label{Fig4b}}
\subfigure[SAM estimator with $\eta_k = k^{-0.8}$ for different $\{\beta_k\}$ choices] {\includegraphics[width=0.5\textwidth]{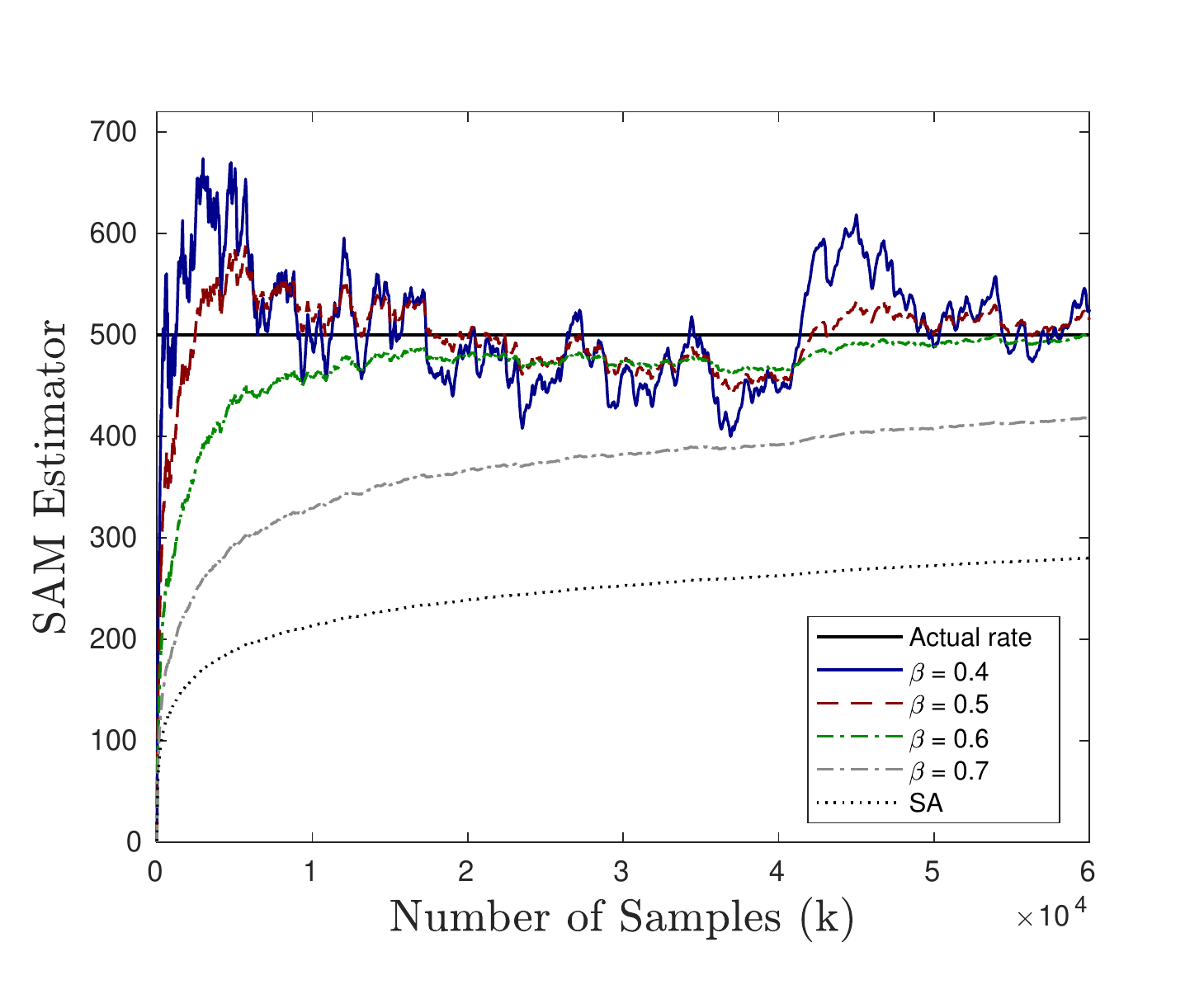}\label{Fig4c}}\hspace{-1em}
\subfigure[SAM estimator with $\beta_k = k^{-0.6}$ for different $\{\eta_k\}$ choices] {\includegraphics[width=0.5\textwidth]{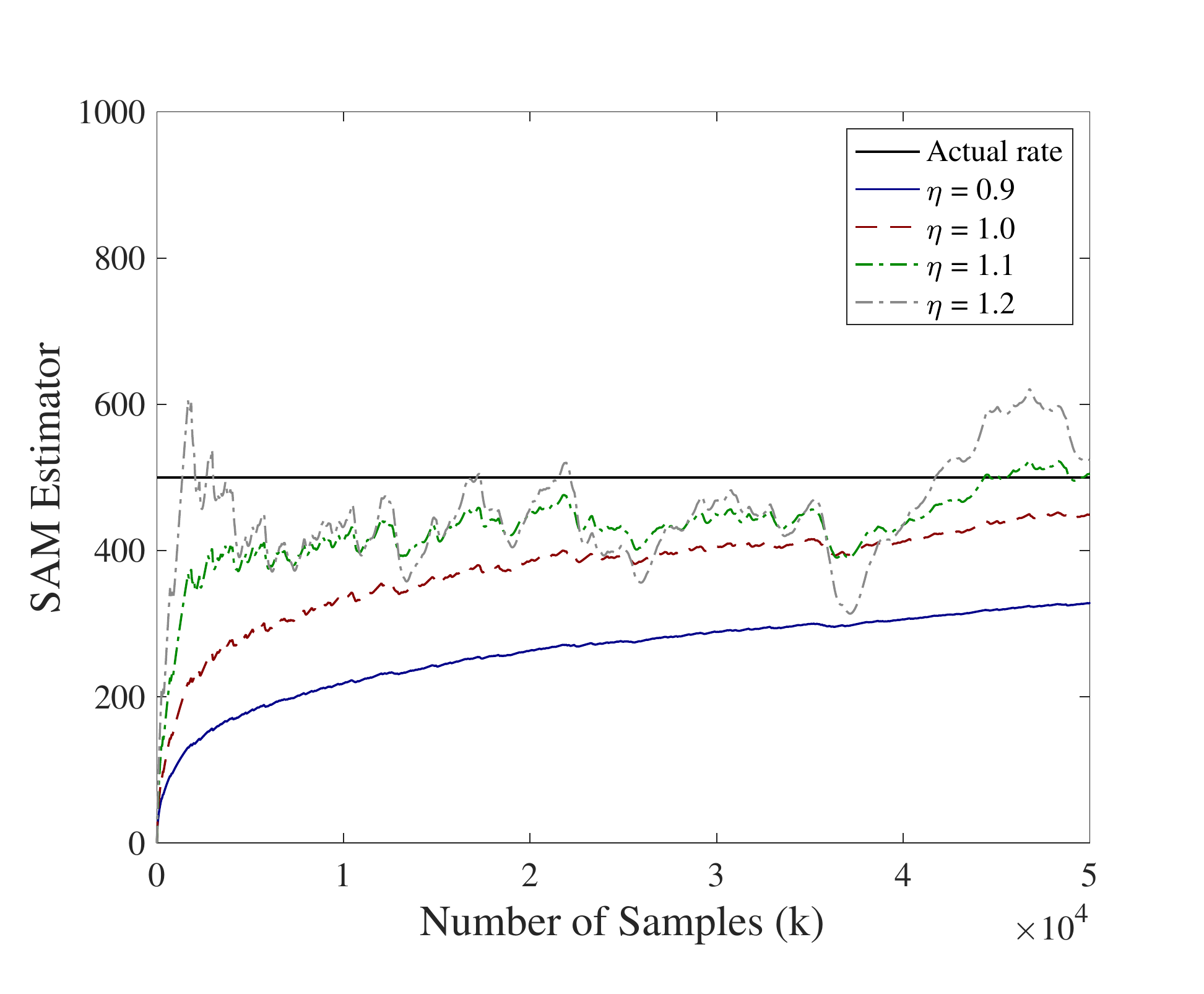}\label{Fig4d}}
\caption{Impact of $\{\alpha_k\}$, $\{\eta_k\}$ and $\{\zeta_k\}$ choices on Performance; $\Delta = 500$ and $p = 10$.}
\label{figv}
\end{figure}

\subsection{Practical Recommendations}
\label{subsec:Practical.Reco}

Here, we provide some recommendations on which estimator to use in practice. Our conclusions are based on what we have observed in the numerical experiments discussed in Section~\ref{sec4:numericalExpts}. We summarise them as follows.
\begin{itemize}
    \item \textit{High frequency crawling:} If the crawling frequency $p$ is comparable to $\Delta$, all estimators (LLN, SA, SAM and MLE)  perform well except the Naive estimator. However, we do not recommend MLE as it is offline and very time-consuming. The examples that correspond to this scenario are depicted in  Fig.~\ref{Fig5b} and Fig.~\ref{fig2: change_rates_comparison}. 
    
    \item \textit{Low frequency crawling:} There are two sub-cases depending on the value of $p$ as compared to $\Delta$. 
    \begin{itemize}
        \item Relatively very low $p$: The Naive estimator is very bad for this scenario as there will several missed changes which will be unaccounted for. We recommend LLN or SAM estimator as they both outperform SA estimator; the example that corresponds to this scenario is depicted in Fig.~\ref{fig3: change_rates_comparison}. For similar reasons as in the previous case, we do not recommend the MLE estimator. 
    
        \item Relatively moderate $p$: The Naive estimator is again a bad choice here. Amongst the rest, we recommend the LLN estimator when several $I_k$ values are available. Otherwise, one can use SAM or the MLE estimator; the offline nature of the MLE will be of concern here as well. The examples that corresponds to this scenario are depicted in  Fig.~\ref{Fig5c} and Fig.~\ref{fig4: change_rates_comparison}.
    \end{itemize} 
\end{itemize}

\section{Estimating Optimal Crawling Rates}
\label{sec:Motivation}
In this section, we discuss how our estimators can be used to identify the optimal crawling rates. Formally, we  suppose that a search engine's local cache consists of $N$ pages. Let $p_i$ denote the rate at which page $i$ is crawled. The goal then is to find the optimal crawling rates such that the overall freshness of the local cache, i.e., 
\begin{equation}\label{Fresh}
    \lim_{T\to\infty}\E\bigg[\dfrac{1}{T} \int\displaylimits_{0}^{T}\bigg( \sum_{i = 1}^{N}w_i \ind{\text{Fresh}(i,t)}\bigg) dt\bigg],
\end{equation}
is maximized subject to the constraint $\sum_{i = 1}^n p_i \leq B$.  Here, $T > 0$ is the time horizon, $w_i$ denotes the importance of the $i$-th page,  $B \geq 0$ is a bound on the overall crawling frequency, $\ind{\text{Fresh}(i, t)}$ is the indicator that page $i$ is fresh at time $t,$ i.e., the local copy matches the actual page.

In \cite{Azar2018}, it was shown that maximizing \eqref{Fresh} under a bandwidth constraint for large enough $T$ corresponds to maximizing $ F(p) = \sum_{i=1}^{N}\big(w_i p_i/(p_i+\Delta_i)\big),$ where $p \equiv (p_1, \ldots, p_N)$. Importantly, it was shown there that this latter optimization problem can be solved efficiently (in $O(N \log N)$ iterations) and provided an algorithm for the same. However, that algorithm requires that the $\Delta_i$'s be known in advance. Our goal here is to combine their algorithm with our estimators  and try and determine the optimal crawling rates.

\begin{figure}
\subfigure[Frequently changing page] {\includegraphics[scale=0.4]{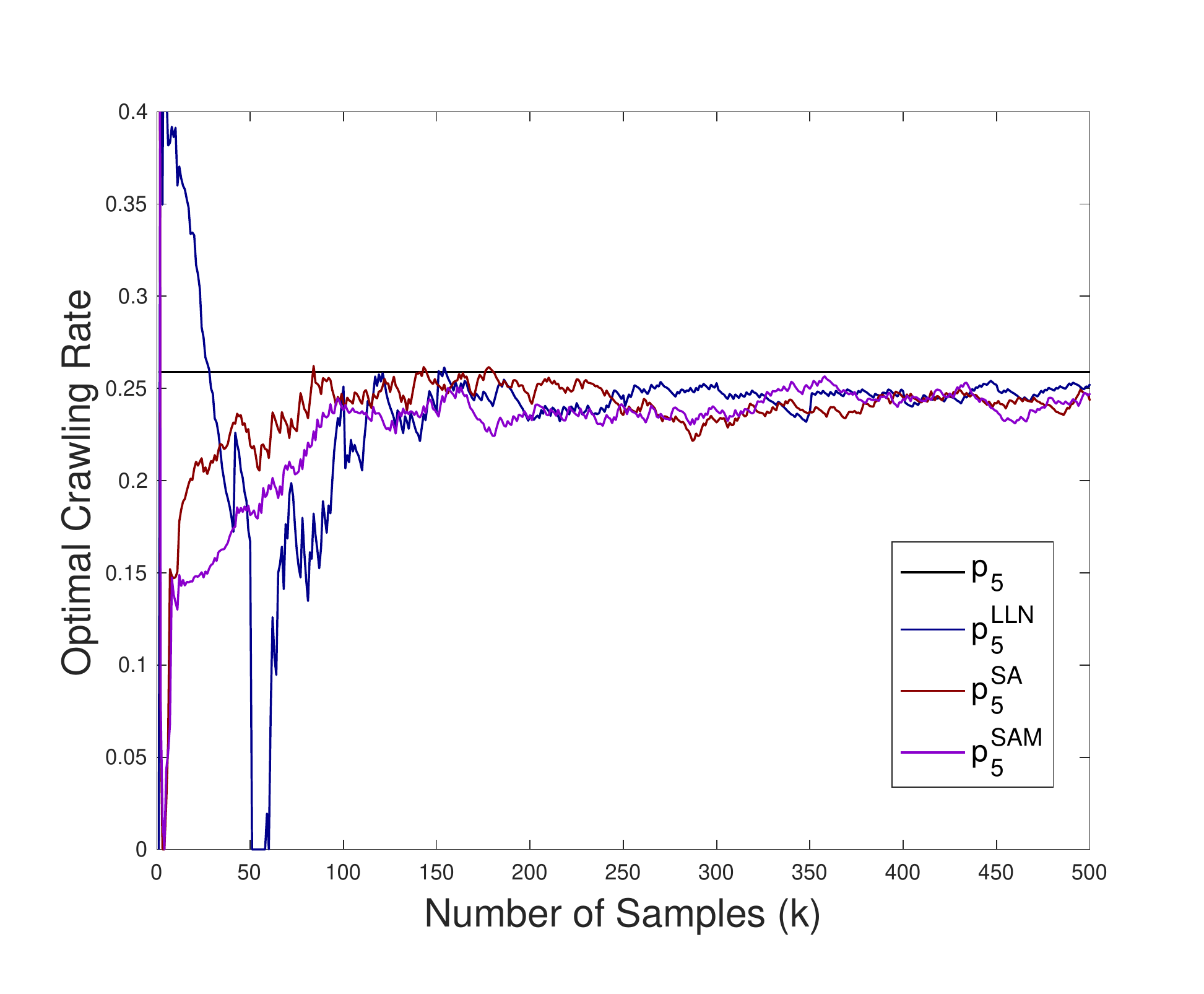}\label{page1H}}
\subfigure[Slowly changing page] {\includegraphics[scale=0.4]{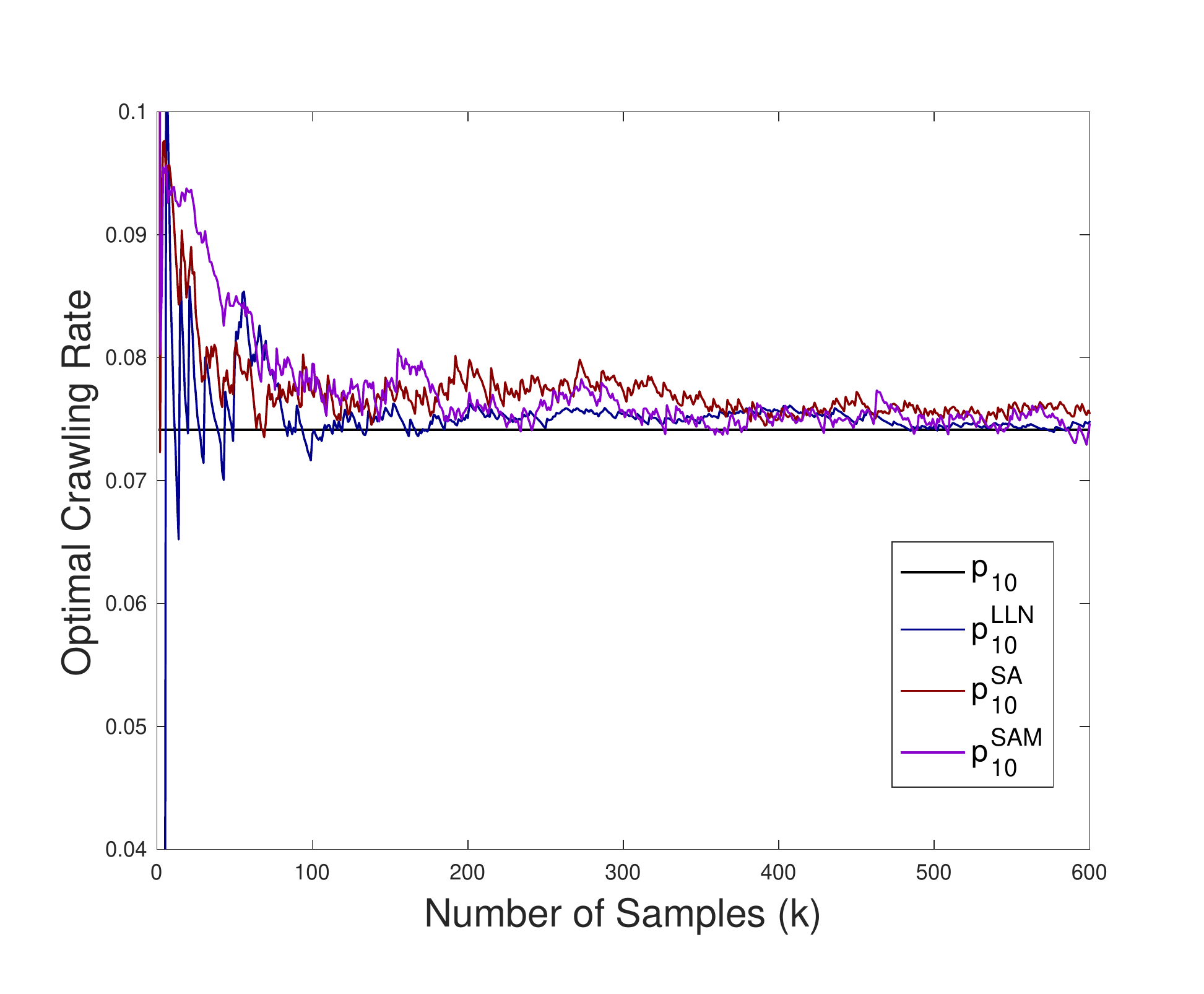}\label{page2L}}
\caption{Adaptive estimation of the optimal crawling rate}
\label{fig: crawl_rates_comparison}
\end{figure} 

% Separately, the recent work \cite{kolobov2019staying} along this direction  view maximizing freshness as minimizing the harmonic staleness penalty related to every possible number of uncrawled changes. The associated problem now becomes minimizing $\Tilde{F}(p) = - \sum_{i=0}^{N} w_i \ln \big(p_i/(p_i+\Delta_i)\big)$. All the algorithms provided in \cite{kolobov2019staying} also assume the known change rates, and the authors use the MLE estimator to obtain them beforehand. Note that the MLE estimator is offline and can be very time-consuming as the  number of samples increases. Thus, one can replace it with any of our online estimators to obtain faster updates for page change rates.

% We now provide the illustration of application of our estimators through a numerical example. The construction of this experiment is inspired by the experimental setting in .

Taking inspiration from \cite{andrews2020tracking}, we consider the following hypothetical setup. We consider $N = 50$ pages, in which we presume that there are $\sqrt{50}\approx 7$ pages that change very frequently, i.e., they account for (say) $90\%$ of the total changes in the system. Accordingly, we suppose that the change rate $\Delta_i$ for each frequently changing page is $4.5/7,$ while for the others it is $0.5/43.$ We further assume that the bound on the overall bandwidth is $B = 5.$ We further assume that the frequently changing pages are more important and assign uniform weight of $2$. On the other hand every other page presumed to have uniform weight of $1.$
% We then independently generate the change and access times for each page by simulating Poisson point processes with suitable parameters. 

We then use the following strategy. We arbitrarily initialize $p_i = B/N = 0.1$ for all $i$, i.e., $B$ is uniformly divided across all the $N$ pages. Since these $p_i$ values are arbitrarily chosen, these need not be the optimal crawling rates. Thereafter, we run each of our estimators for (say) 50 iterations. We then use the estimates of $\Delta_i$ at the $50^{th}$ iteration as input to \cite[Algorithm 2]{Azar2018} and obtain the associated possibly sub-optimal crawling rates. Denoting these new rates as $p_i$ again, we now repeat the above procedure. That is, we use $p_i$'s for $50$ iterations to estimate the $\Delta_i$'s and, in turn, use the later to obtain estimates for the new $p_i$'s.

Fig.\ \ref{fig: crawl_rates_comparison} compares the estimated crawling rates obtained using our three estimators with the optimal ones obtained by using the actual $\Delta_i$ values in \cite[Algorithm 2]{Azar2018} for the two kinds of pages. In this experiment, the two SA-based estimators appear to perform better than the LLN estimator, at least initially. Note that $p_5$ denotes the optimal crawling rate for the $5$-th page, while $p_5^{\text{SA}}$ denotes the estimate obtained by using the SA estimator, etc. $P_{10},$ $p_{10}^{SA},$ etc. have similar meanings in relation to the $10$-th page. The parameters we chose for our different estimators are as follows. For the LLN estimator,  we chose  $\alpha_k \equiv 1$. For the SA estimator, we chose $\eta_k = (k + 1)^{-\eta}$ with $\eta = 0.75$. For the SAM estimator, we choose $\eta_k = (k + 1)^{-\eta}$ with $\eta = 1.3$ and $\zeta_k = (\beta_k - \omega \eta_k)/\beta_{k - 1}$ with $\omega = 1,$ and $\beta_k = (k + 1)^{-\beta}$ for $\beta = 0.75$.

%Subsequently, we estimate the unknown change rate for each page individually. 
% The idea here is, we first estimate  the unknown change rate for each page individually using each of our estimator. In particular, 

\section{Conclusion and Future Work} \label{sec7}

We propose three new online approaches for estimating the rate of change of web pages. We provide theoretical guarantees for their convergence and also provide numerical simulations to compare their performances.  From experiments, one can verify that the proposed estimators perform significantly better than the Naive estimator. Also, they have extremely simple update rules which make them computationally attractive when compared to MLE. We also provide important insights on which estimator one should use in practice.  

The performance of both our estimators currently depend on the choice of $\{\alpha_k\}$, $\{\eta_k\},$ and $\{\zeta_k\}$ respectively. One aspect to analyze in the future would be to ask what would be the ideal choice for these sequences that would help attain the fastest convergence rate. Another interesting research direction to pursue is to combine the online estimation with dynamic optimization. 

\section*{Acknowledgment}
This work is partly supported by ANSWER project PIA FSN2 (P15 9564-266178 \textbackslash  DOS0060094) and DST-Inria  project ``Machine Learning for Network Analytics" IFC/DST-Inria-2016-01/448. Research of Gugan Thoppe is supported by IISc's  start up grants SG/MHRD-19-0054
and SR/MHRD-19-0040. The authors would also like to thank A.~Budhiraja for several useful discussions concerning Theorem~\ref{thm:SAM_Est}.

%\section*{References}
\bibliographystyle{elsarticle-num}
\bibliography{ref}
%\newpage
\appendix

\section{Convergence of Stochastic Approximation Algorithms}
In this section, we discuss results from literature that provide sufficient conditions for convergence of both one-timescale and two-timescale stochastic approximation algorithms. 

We begin by discussing the convergence of a generic one-timescale stochastic approximation algorithm. This result is obtained by combining \cite[Chapter~2, Corollary~4,]{borkar2009stochastic} and \cite[Chapter~3, Theorem~7]{borkar2009stochastic}.
\begin{theorem}[Convergence of One-timescale Stochastic Approximation \cite{borkar2009stochastic}]
\label{thm:1TS.Conv}
Consider the update rule
\[
    y_{k + 1} = y_k + \eta_k [h(y_k) + M_{k + 1}],
\]
where $\eta_k$ is a positive scalar; $y_k, M_k \in \Real^d;$ and $h: \Real^d \to \Real^d$ is a deterministic function. Suppose the following conditions hold:
\begin{enumerate}[i.)]
    \item $\sum_{k = 0}^\infty \eta_k = \infty$ and $\sum_{k = 0}^\infty \eta_k^2 < \infty.$
    
    \item $\{M_k\}$ is a martingale difference sequence with respect to the increasing family of $\sigma-$fields 
    \[
        \cF_k :=\sigma(y_j, M_j, j \leq k), \quad k \geq 0.
    \]
    That is, $\E[M_{k + 1}|\cF_k] = 0$ a.s., $k \geq 0.$ Further, there is a constant $C \geq 0$ such that $\E[\|M_{k + 1}\|^2|\cF_k] \leq C(1 + \|y_k\|^2) \text{ a.s.}$ for all $k \geq 0.$
        
    \item $h$ is a globally Lipschitz continuous function. Further, the ODE  $\dot{y}(t) = h(y(t))$ has  an unique globally asymptotically stable equilibrium $\yS.$ 
    
    \item There exists a continuous function $h_\infty: \Real^d \to \Real^d$ such that the functions $h_c(x) := h(cx)/c,$ $c \geq 1,$ satisfy $h_c \to h_\infty$ uniformly on compact sets as $c \to \infty.$ Further, the ODE $\dot{y}(t) = h_{\infty}(y(t))$ has the origin as its unique globally asymptotically stable equilibrium.
\end{enumerate}
Then, $y_k \to \yS$ a.s.
\end{theorem}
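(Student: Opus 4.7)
The plan is to follow the classical ODE method for stochastic approximation, which proceeds in two stages: first establishing almost sure boundedness of the iterate sequence $\{y_k\}$, and then showing that bounded iterates asymptotically track the trajectories of the limiting ODE $\dot y = h(y)$. Since $y_*$ is the unique globally asymptotically stable equilibrium by Condition iii.), any trajectory of the ODE eventually enters and remains in arbitrarily small neighborhoods of $y_*$, so tracking will deliver the desired conclusion.

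For the boundedness step I would use a Borkar--Meyn style rescaling argument that exploits Condition iv.). Define the time instants $T_n = \sum_{k = 0}^{n - 1} \eta_k$, the piecewise-linear interpolation $\bar y(\cdot)$ of $\{y_k\}$, and a rescaled process $\hat y(t) = \bar y(t)/r(t)$ where $r(t) = \max(\|\bar y(T_n)\|, 1)$ on the block $[T_n, T_{n + 1})$ chosen via a stopping-time scheme. On each block of fixed length $T$, the rescaled iterates satisfy a perturbed Euler discretization of $\dot y = h_{c}(y)$. Since $h_c \to h_\infty$ uniformly on compacts and the origin is the unique globally asymptotically stable equilibrium of $\dot y = h_\infty(y)$, for $T$ large enough the deterministic part of the rescaled flow contracts any unit vector by a fixed factor $< 1$. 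A contradiction argument then forces $\sup_k \|y_k\| < \infty$ a.s. The martingale contribution on each block is controlled via Doob's maximal inequality together with the quadratic growth bound in Condition ii.) and the tail $\sum_{k \ge n}\eta_k^2 \to 0$ from Condition i.).

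For the tracking step, once $\{y_k\}$ is known to be almost surely bounded, Condition ii.) implies $\E[\|M_{k + 1}\|^2 | \cF_k]$ is uniformly bounded on the almost sure event, so the martingale $\sum_k \eta_k M_{k + 1}$ is $L^2$-bounded (using $\sum \eta_k^2 < \infty$) and converges a.s. by the martingale convergence theorem. Writing the update as a perturbed discretization of the ODE and applying a discrete Gronwall inequality on each finite window $[T_n, T_n + T]$, one shows that the shifted interpolations $\bar y(T_n + \cdot)$ converge uniformly on compact intervals to solutions of $\dot y = h(\cdot)$; that is, $\bar y$ is an asymptotic pseudo-trajectory of the flow. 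The Benaim--Hirsch limit-set theorem then constrains the limit set of $\{y_k\}$ to be a compact internally chain-transitive invariant set of the ODE, which by Condition iii.) must reduce to $\{y_*\}$.

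The hardest part is the boundedness argument, not the tracking. The delicate issue is that Condition ii.) allows the conditional noise variance to grow quadratically in $\|y_k\|$, so the stochastic perturbation on a block scales with the same quantity that the deterministic contraction is trying to shrink. One must pick the block length $T$ large enough that the uniform contraction ratio from the $h_\infty$ flow strictly dominates the stochastic fluctuation, which itself is controlled only by the tail $\sum_{k \ge n}\eta_k^2$ in relative (rescaled) terms. Balancing these two quantities across infinitely many blocks, and verifying that the convergence $h_c \to h_\infty$ is uniform enough on the relevant compact sets to preserve the contraction under small perturbations of the drift, is the crux of the argument.
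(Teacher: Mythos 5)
Your proposal is correct and follows essentially the same route as the paper, which simply obtains this theorem by citing \cite[Chapter~2, Corollary~4]{borkar2009stochastic} (the tracking/chain-transitivity argument you sketch) together with \cite[Chapter~3, Theorem~7]{borkar2009stochastic} (the Borkar--Meyn rescaling stability argument you sketch, using $h_c \to h_\infty$ and the global asymptotic stability of the origin for $\dot{y} = h_\infty(y)$). Your sketch accurately unpacks the content of those two cited results, including the correct identification of the boundedness step as the delicate one.
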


Often, stochastic approximation algorithms contain an additional perturbation term that is asymptotically negligible. The next result discusses convergence of such algorithms.

\begin{proposition}[Convergence of Perturbed One-timescale Stochastic Approximation]
\label{prop:1TS.Conv.Perturbed}
Consider the update rule
\[
y_{k + 1} = y_k + \eta_k[h(y_k) + \epsilon_k + M_{k + 1}],
\]
where $\epsilon_k$ is an additional perturbation term while the other terms have the same meaning as in Theorem~\ref{thm:1TS.Conv}. Suppose that the four conditions listed in Theorem~\ref{thm:1TS.Conv} hold true. Further, suppose  $\|\epsilon_k\| \leq C \rho_k(1 + \|y_k\|)$ a.s. for $k \geq 0,$ where $C$ is a positive constant and $\{\rho_k\}$ is a sequence of positive scalars such that $\lim_{k \to \infty} \rho_k = 0.$ Then, $y_k \to \yS$ a.s.
\end{proposition}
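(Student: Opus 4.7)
The plan is to reduce this result to Theorem~\ref{thm:1TS.Conv} by showing that, under the stated growth bound on $\{\epsilon_k\}$, the perturbation term contributes negligibly both to the stability argument and to the ODE tracking argument that underlie the proof of Theorem~\ref{thm:1TS.Conv}. Note we cannot simply absorb $\epsilon_k$ into the noise (since $\epsilon_k$ is $\cF_k$-measurable and not a martingale difference) nor into $h$ (since that would destroy time-homogeneity and Lipschitz control on the limiting function). Instead, we treat it as a separate driving term and track its influence directly.

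First, I would establish a.s.\ boundedness of $\{y_k\}$ by adapting the Borkar--Meyn scaling argument behind \cite[Chapter~3, Theorem~7]{borkar2009stochastic}. The usual idea is to define a rescaled sequence $\hat{y}_k := y_k/r(k)$ with $r(k) := \max\{1, \|y_k\|\}$ on suitable time blocks where $r$ is approximately constant, and then show that the trajectory of $\hat{y}$ on such blocks is close to a solution of $\dot{y}(t) = h_\infty(y(t))$. Because $h_\infty$ has the origin as its unique globally asymptotically stable equilibrium, the rescaled trajectory contracts, ruling out divergence. In our setting the rescaled update acquires an extra term $\eta_k \epsilon_k / r(k)$ whose norm is bounded, in view of the hypothesis, by $C \eta_k \rho_k (1 + \|y_k\|)/r(k) \leq 2 C \eta_k \rho_k$. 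Since $\rho_k \to 0$ and $\sum \eta_k^2 < \infty$, the cumulative contribution of this term over any bounded interval of ODE time is $o(1)$, which is exactly what is needed for the rescaled iterates to still be approximated by the $h_\infty$-ODE. The contraction argument therefore carries through and yields $\sup_k \|y_k\| < \infty$ a.s.

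Second, with $\{y_k\}$ a.s.\ bounded (say $\|y_k\| \leq K$ for all $k$ on a probability-one event), the perturbation satisfies $\|\epsilon_k\| \leq C \rho_k (1 + K) \to 0$ a.s. Consequently, in the standard continuous-time interpolation $\bar{y}(t)$ of the iterates on the timescale defined by $t_k = \sum_{j < k} \eta_j$, the error caused by $\epsilon_k$ over any window $[t_k, t_k + T]$ is bounded by $\sum_{j : t_j \in [t_k, t_k+T]} \eta_j \|\epsilon_j\| \leq C(1+K)(T + \eta_{\max}) \sup_{j \geq k} \rho_j \to 0$ as $k \to \infty$. This is precisely the regime in which the Gronwall/ODE-tracking lemmas of \cite[Chapter~2]{borkar2009stochastic} remain valid, so $\bar{y}(\cdot)$ is an asymptotic pseudo-trajectory of $\dot{y}(t) = h(y(t))$. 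Combined with the global asymptotic stability of $\yS$ from Condition (iii), we conclude $y_k \to \yS$ a.s.

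The main obstacle I anticipate is the first step: rerunning the Borkar--Meyn stability proof with the extra term in hand. The standard proof proceeds via a delicate block-wise argument that compares normalized iterates on shrinking rescaled windows to the $h_\infty$-flow, and one must carefully verify that the perturbation (rather than destroying the scaling invariance used in that proof) contributes a truly $o(1)$ discrepancy on each block. Once past that hurdle, the second step is a standard perturbed-ODE-tracking argument and is essentially routine.
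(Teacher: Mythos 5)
Your proposal is correct and follows essentially the same route as the paper's own proof: both first push the perturbation through the Borkar--Meyn scaling argument for a.s.\ boundedness (noting that the rescaled perturbation $\epsilon_k/r(\cdot)$ admits a bound of the type needed there), and then use boundedness to conclude $\|\epsilon_k\| = O(\rho_k) \to 0$ a.s., so that the standard ODE-tracking lemmas of \cite[Chapter~2]{borkar2009stochastic} apply with an asymptotically negligible extra term. The only difference is presentational: the paper phrases the argument as a checklist of which lemmas in \cite{borkar2009stochastic} and which of their stated extensions remain valid, while you describe the same two steps in more self-contained language.
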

\begin{proof}
We only give a sketch of the proof since the arguments are more or less similar to the ones used to derive Theorem~\ref{thm:1TS.Conv}. As mentioned before, this latter result follows from \cite[Chapter~2, Corollary~4]{borkar2009stochastic} and \cite[Chapter3, Theorem~7]{borkar2009stochastic}. We now briefly discuss how, even in the presence of the additional perturbation term, these two results continue to hold. 

    \begin{itemize}
    \item \cite[Chapter~2, Corollary~4]{borkar2009stochastic}: This result follows from \cite[Chapter~2, Theorem~2]{borkar2009stochastic} which, in turn, follows from \cite[Chapter~2, Lemma~1]{borkar2009stochastic}. However, as shown in extension $3$ in \cite[pg. 17]{borkar2009stochastic}, this latter result goes through even in the presence of the perturbation term $\{\epsilon_k\}.$ This is because $\epsilon_k$ is asymptotically negligible a.s. More specifically, observe that the sequence $\{y_k\}$ is a.s. bounded under assumption \textbf{(A4)} given on \cite[pg.~17]{borkar2009stochastic}. This implies that $\{\epsilon_k\}$ is a random bounded sequence which is $o(1)$ a.s.; the latter is true since $\rho_k \to 0.$

    \item \cite[Chapter3, Theorem~7]{borkar2009stochastic}: The proof of this result is based on Lemmas 1 to 6 in \cite[Chapter~3]{borkar2009stochastic}. The first three of these lemmas concerns the behaviour of the solution trajectories of the limiting ODE $\dot{y}(t) = h_\infty(y(t)).$ Since the perturbation term does not affect the definition of this limiting ODE in any way whatsoever,  these three results continue to hold as before. Similarly, Lemma~5 in ibid is unchanged since it only concerns the convergence of the sum of  martingale differences $\sum_{k} \eta_k \hat{M}_{k + 1}$  (recall that the stepsize sequence in our update rule is $\eta_k$). With regards to the proof of  Lemma~4 in ibid, observe that our update rule satisfies
    \[
        \hat{y}(t(k + 1)) = \hat{y}(t(k)) + \eta_k(h_{r(n)}(\hat{y}(t(k))) + \hat{\epsilon}_k + \hat{M}_{k + 1}), \quad m(n) \leq k \leq m(n + 1),
    \]
    where $\hat{\epsilon}_k = \epsilon_k/r(n)$ while the other notations  are analogous to the ones defined in \cite[Chapter~3]{borkar2009stochastic}. Because $\|\epsilon_k\| \leq C\rho_k(1 + \|y_k\|),$ $\rho_k \to 0,$ and $r(n) \geq 1,$ it follows that 
    \[
        \|\hat{\epsilon}_k\| \leq C_1(1 + \|\hat{y}(t(k))\|^2)
    \]
    for some positive constant $C_1.$ Note that this is in similar spirit to (3.2.5) in ibid. It is then easy to see that the rest of the proof goes through as before. This  shows that \cite[Chapter~3,Lemma~4]{borkar2009stochastic} continues to be true even in the presence of the the perturbation term. Using exactly the same bound for $\|\hat{\epsilon}_k\|$ obtained above, one can see that the arguments in the proof of Lemma~6 in ibid hold as well. Thus, \cite[Chapter~3, Theorem~7]{borkar2009stochastic} continues to hold, which is exactly what we wanted to establish.
\end{itemize}

The desired result now follows. 
\end{proof}

We next state a result that discusses the convergence of a generic two-timescale stochastic approximation algorithm. The proof of this result is based on \cite[Chapter 6, Theorem~2]{borkar2009stochastic} and \cite[Theorem~10]{lakshminarayanan2017stability}.

\begin{theorem}[Convergence of Two-timescale Stochastic Approximation \cite{borkar2009stochastic, lakshminarayanan2017stability}]
\label{thm:2TS.Conv}
Consider the update rules
\begin{align*}
    u_{k + 1} = {} & u_k + \gamma_k [h(u_k, z_k) + M^{(1)}_{k + 1}], \\
    z_{k + 1} = {} & z_k + \beta_k[g(u_k, z_k) + M^{(2)}_{k + 1}],
\end{align*}
where $\gamma_k$ and $\beta_k$ are  positive scalars; $u_k, z_k, M^{(1)}_k, M^{(2)}_k \in \Real^d;$ and $h, g: \Real^{2d} \to \Real^d$ are two deterministic functions. Suppose the following conditions hold:
\begin{enumerate}[i.)]

    \item $\sum_{k \geq 0} \gamma_k  = \sum_{k \geq 0} \beta_k = \infty,$  $\sum_{k \geq 0} \left(\gamma_k^2 + \beta_k^2 \right) < \infty,$ and $\lim_{k \to \infty} \dfrac{\beta_k}{\gamma_k} = 0.$

    \item $\{M_k^{(1)}\}$ and $\{M_k^{(2)}\}$ are martingale difference sequences with respect to the increasing $\sigma-$fields
    \[
        \cF_k := \sigma(u_j, z_j, M_j^{(1)}, M_j^{(2)}, j \leq k), \quad k \geq 0.
    \]
    Further, there exists a constant $C \geq 0$ such that $\E[\|M_{k + 1}^{(i)}\|^2|\cF_k] \leq C(1 + \|u_k\|^2 + \|z_k\|^2)$ for $i = 1, 2$ and $k \geq 0.$
    
    \item $h$ and $g$ are globally Lipschitz continuous functions. For each fixed $z,$ the ODE $\dot{u}(t) = h(u(t), z)$ has a unique globally asymptotically stable equilibrium $\phi(z),$ where $\phi: \Real^d \to \Real^d$ is Lipschitz continuous. Further, the ODE $\dot{z}(t) = g(\phi(z(t)), z(t))$ has an unique globally asymptotically stable equilibrium $\zS.$ 
    
    \item The functions $h_c(u, z) := h(cu, cz)/c,$ $c \geq 1,$ satisfy $h_c \to h_\infty$ as $c \to \infty,$ uniformly on compacts for $h_\infty.$ Also, for each fixed $z \in \Real^d,$ the limiting ODE $\dot{u}(t) = h_\infty(u(t), z)$ has a unique globally asymptotically stable equilibrium $\phi_\infty(z),$ where $\phi_\infty: \Real^d \to \Real^d$ is a Lipschitz map. Further, $\phi_\infty(0) = 0.$ Separately, the functions $g_c(z):= g(c \phi_\infty(z), cz)/c,$ $c \geq 1,$ satisfy $g_c \to g_\infty$ as $c \to \infty,$ uniformly on compacts for some $g_\infty.$ Also, the limiting ODE $\dot{z}(t) = g_\infty(z(t))$ has the origin as its unique globally asymptotically stable equilibrium. 
\end{enumerate}
Then, $(u_k, z_k) \to (\phi(\zS), \zS)$ a.s.
\end{theorem}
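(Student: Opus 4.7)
The plan is to combine two classical ingredients from the stochastic approximation literature: (a) an almost-sure boundedness (stability) result for two-timescale recursions based on the scaling hypotheses in Condition iv.), due to Lakshminarayanan and Borkar, and (b) the standard two-timescale convergence theorem of Borkar (Chapter 6, Theorem 2) which presumes a priori boundedness of the iterates. Once both are invoked, the conclusion $(u_k, z_k) \to (\phi(\zS), \zS)$ a.s.\ will follow.

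First, I would establish that $\sup_k (\|u_k\| + \|z_k\|) < \infty$ a.s. Following the Borkar--Meyn scheme adapted to the two-timescale setting, fix a sequence of blow-up epochs $\{m(n)\}$, set $r(n) := \max\{\|u_{m(n)}\|, \|z_{m(n)}\|, 1\}$, and consider piecewise-constant interpolations of the rescaled iterates $u_k/r(n)$ and $z_k/r(n)$ over the $\gamma$-timescale (respectively $\beta$-timescale). Using the Lipschitz bounds on $h$, $g$, the convergences $h_c \to h_\infty$ and $g_c \to g_\infty$ uniformly on compacts, and the variance bound in Condition ii.) (which makes the rescaled martingale noise terms vanish a.s.\ via Doob's inequality together with square summability of $\gamma_k$ and $\beta_k$), one shows that any limit point of the rescaled trajectories solves $\dot u = h_\infty(u,z)$ and $\dot z = g_\infty(z)$. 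By Condition iv.), both of these limit ODEs are gas at the origin, so after a fixed time $T>0$ the rescaled iterate has norm strictly smaller than $1$. This contradicts the defining property of $r(n)$ at the blow-up epoch, forcing boundedness. This is precisely Theorem~10 of Lakshminarayanan and Borkar.

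Second, given a.s.\ boundedness, I would run the standard two-timescale tracking argument. Since $\beta_k/\gamma_k \to 0$, the $u$-iterate is fast. Define the tracking error $e_k := u_k - \phi(z_k)$. By Lipschitz continuity of $\phi$ and the $z$-recursion, $\|\phi(z_{k+1}) - \phi(z_k)\| = O(\beta_k(1 + \|u_k\| + \|z_k\|))$, which, divided by $\gamma_k$, is $o(1)$ a.s. A Gronwall-type tracking argument applied on windows of length $T$ along the $\gamma$-timescale, together with a.s.\ convergence of the martingale sum $\sum_k \gamma_k M^{(1)}_{k+1}$ (guaranteed by boundedness of $\{u_k\}, \{z_k\}$ and the variance condition), and the gas property of $\phi(z)$ for the frozen fast ODE $\dot u = h(u, z)$, yields $e_k \to 0$ a.s. With this in hand, the slow recursion can be rewritten as
\[
z_{k+1} = z_k + \beta_k\bigl[\bar g(z_k) + (g(u_k, z_k) - g(\phi(z_k), z_k)) + M^{(2)}_{k+1}\bigr],
\]
where $\bar g(z) := g(\phi(z), z)$ and the middle difference is $o(1)$ a.s.\ by Lipschitz continuity of $g$. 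This is a one-timescale SA on the slow scale with effective drift $\bar g$. Invoking Theorem~\ref{thm:1TS.Conv} (whose hypotheses are satisfied since $\bar g$ is Lipschitz by Condition iii.), the ODE $\dot z = \bar g(z)$ is gas at $\zS$, the perturbation is asymptotically negligible, and the limiting scaling behavior at infinity has been verified above) yields $z_k \to \zS$ a.s. Continuity of $\phi$ then gives $u_k = \phi(z_k) + e_k \to \phi(\zS)$ a.s.

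The main obstacle is the stability step. Handling the two distinct stepsize scales simultaneously when passing to the scaling limit is delicate: the fast limiting ODE $\dot u = h_\infty(u,z)$ must be gas in $u$ for each frozen $z$ with equilibrium $\phi_\infty$ satisfying $\phi_\infty(0) = 0$, while the slow limiting ODE $\dot z = g_\infty(z)$ must be independently gas at the origin, and one must couple these two analyses on the blow-up scale $r(n)$ so as to produce a clean contradiction. Once this hurdle is cleared, the tracking and one-timescale arguments in the second and third paragraphs are largely routine applications of results already in the appendix.
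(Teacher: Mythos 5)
Your proposal is correct and takes essentially the same route as the paper, which offers no independent proof of this theorem but simply attributes it to the combination of the stability result in \cite[Theorem~10]{lakshminarayanan2017stability} and the convergence result in \cite[Chapter~6, Theorem~2]{borkar2009stochastic} --- exactly the two ingredients you identify and sketch. Your elaboration of the Borkar--Meyn scaling argument for boundedness and the subsequent tracking-plus-slow-timescale reduction is a faithful account of how those cited results work.
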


The last and final result of this section concerns the convergence of two-timescale stochastic approximation with perturbation terms that are asymptotically negligible. 

\begin{proposition}[Convergence of Perturbed Two-timescale Stochastic Approximation]
\label{prop:2TS.Conv.Perturbed}
Consider the update rules
\begin{align*}
u_{k + 1} = {} & u_k + \gamma_k[h(u_k, z_k) + \epsilon^{(1)}_k + M^{(1)}_{k + 1}] \\
z_{k + 1} = {} & z_k + \beta_k[g(u_k, z_k) + \epsilon^{(2)}_k + M^{(2)}_{k + 1}],
\end{align*}
where $\epsilon^{(1)}_k, \epsilon^{(2)}_k$ are additional perturbation terms while the other terms have the same meaning as in Theorem~\ref{thm:2TS.Conv}. Suppose that the four conditions listed in Theorem~\ref{thm:2TS.Conv} hold true. Further, suppose  $\|\epsilon^{(i)}_k\| \leq C \rho^{(i)}_k(1 + \|u_k\| + \|z_k\|)$ a.s. for $k \geq 0$ and $i = 1,2,$ where $C$ is a positive constant and $\{\rho^{(i)}_k\},$ $i = 1, 2,$ are sequences of positive scalars such that $\lim_{k \to \infty} \rho^{(i)}_k = 0.$ Then, $(u_k, z_k) \to (\phi(\zS), \zS)$ a.s.
\end{proposition}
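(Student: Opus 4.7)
The plan is to mimic the strategy used to establish Proposition~\ref{prop:1TS.Conv.Perturbed}. Recall that Theorem~\ref{thm:2TS.Conv} itself decomposes into (a) a stability statement, i.e., $\sup_k (\|u_k\| + \|z_k\|) < \infty$ a.s., following from \cite[Theorem~10]{lakshminarayanan2017stability}, and (b) a two-timescale ODE tracking conclusion from \cite[Chapter~6, Theorem~2]{borkar2009stochastic} applied on the a.s. event of boundedness. I would show that each of these ingredients continues to hold once the perturbation terms $\epsilon_k^{(1)}, \epsilon_k^{(2)}$ are added, under the standing hypothesis $\|\epsilon_k^{(i)}\| \leq C \rho_k^{(i)}(1 + \|u_k\| + \|z_k\|)$ with $\rho_k^{(i)} \to 0$.

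For the stability piece, the argument in \cite{lakshminarayanan2017stability} rescales the iterates by a (random) divergent sequence $r(n) \geq 1$ and shows that, over a fixed continuous-time window, the rescaled iterates $\hat{u}_k := u_k/r(n)$ and $\hat{z}_k := z_k/r(n)$ track the limiting two-timescale ODE system driven by $h_\infty$ and $g_\infty$, whose equilibrium at the origin is globally asymptotically stable. With the perturbations present, the $u$-update acquires an additional drift $\hat{\epsilon}_k^{(1)} := \epsilon_k^{(1)}/r(n)$ and likewise for $z$. Since $r(n) \geq 1$ and the rescaled iterates are kept in a known compact set by construction of $r(n)$, the growth bound gives $\|\hat{\epsilon}_k^{(i)}\| \leq C \rho_k^{(i)} \bigl(1 + \|\hat{u}_k\| + \|\hat{z}_k\|\bigr) \to 0$ uniformly over the relevant window. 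So the ODE tracked by the rescaled iterates is unchanged, and the stability conclusion of \cite[Theorem~10]{lakshminarayanan2017stability} is preserved. This runs in exact parallel to the argument used for Proposition~\ref{prop:1TS.Conv.Perturbed}, which rescued \cite[Chapter~2, Corollary~4]{borkar2009stochastic} in the one-timescale case. For the tracking piece, once $\sup_k(\|u_k\| + \|z_k\|) < \infty$ a.s., the bound $\|\epsilon_k^{(i)}\| \leq C \rho_k^{(i)}(1 + \|u_k\| + \|z_k\|)$ implies that the perturbations are a.s. bounded and $o(1)$. The Gronwall-type estimates underlying \cite[Chapter~6, Theorem~2]{borkar2009stochastic} then pick up an additional term of order $\sum_{j=k}^{k+T} \gamma_j \|\epsilon_j^{(1)}\|$ on the fast timescale and $\sum_{j=k}^{k+T} \beta_j \|\epsilon_j^{(2)}\|$ on the slow timescale, each of which vanishes as $k \to \infty$ thanks to $\rho_k^{(i)} \to 0$ and the finite length of the tracking windows. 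Thus both timescales continue to track their respective limiting ODEs, and the conclusion $(u_k, z_k) \to (\phi(\zS), \zS)$ a.s. follows.

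The main obstacle will be redoing the stability argument from \cite{lakshminarayanan2017stability} carefully in the presence of the perturbation, in particular verifying that the inductive construction of $r(n)$ and the compact containment of the rescaled iterates are not disrupted by the cumulative drift contributed by $\sum_j \gamma_j \hat{\epsilon}_j^{(1)}$ and $\sum_j \beta_j \hat{\epsilon}_j^{(2)}$. Once this is handled as above, the subsequent two-timescale ODE tracking is essentially routine: on each timescale, one adapts the unperturbed Gronwall comparison exactly as was done in the one-timescale case inside Proposition~\ref{prop:1TS.Conv.Perturbed}, and no new conceptual ingredient is needed beyond the observation that a linearly growing perturbation with vanishing coefficient is asymptotically negligible once the iterates are a.s. bounded.
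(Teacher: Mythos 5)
Your proposal is correct and follows essentially the same route as the paper: both decompose the argument into the stability result of \cite[Theorem~10]{lakshminarayanan2017stability} and the two-timescale tracking result of \cite[Chapter~6, Theorem~2]{borkar2009stochastic}, and both show each survives the perturbations because the bound $\|\epsilon^{(i)}_k\| \leq C\rho^{(i)}_k(1 + \|u_k\| + \|z_k\|)$ with $\rho^{(i)}_k \to 0$ renders the (rescaled) perturbations asymptotically negligible. Your sketch is, if anything, slightly more explicit than the paper's about the mechanism inside the rescaling argument, but no new idea is introduced and none is missing.
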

\begin{proof}
As stated before, this result follows from  \cite[Chapter 6, Theorem~2]{borkar2009stochastic} and \cite[Theorem~10]{lakshminarayanan2017stability}. We now briefly discuss how these results continue to hold even in the presence of the perturbation terms $\epsilon^{(1)}_k$ and $\epsilon^{(2)}_k.$

\begin{itemize}
    \item \cite[Chapter 6, Theorem~2]{borkar2009stochastic}: This result, as well as $\cite[Chapter~6, Lemma~1]{borkar2009stochastic}$ on which it relies, are essentially proved by defining suitable one-timescale stochastic approximation algorithms and then using  convergence results concerning the latter. In our situation, both these will have additional perturbation terms that are asymptotically negligible. Consequently, by arguing as in the third extension given in \cite[pg.~27]{borkar2009stochastic}, it can be shown that the   asymptotic behaviour of these two algorithms remains unchanged even in the perturbed setup. Therefore, it follows that the conclusions of \cite[Chapter~6, Theorem~2]{borkar2009stochastic} continue to hold as before. 
    
    \item \cite[Theorem~10]{lakshminarayanan2017stability}: This result is based on Lemmas~2 to 7 and Lemma 9 as well as Theorems~6 and 7 in ibid. Lemmas~2 to 5 in ibid concern the limiting ODEs described in condition iv.) of Theorem~\ref{thm:2TS.Conv} above. The definitions of these ODEs do not depend on the presence or absence of the perturbation terms. Therefore, the aforementioned four lemmas continue to hold as before. On the other hand, Lemmas~6 and 9 in ibid rely on the results in Chapter~3 and Chapter~6 of  \cite{borkar2009stochastic}. As argued before, these results continue to hold even in the presence of perturbation terms and, consequently, so do Lemmas~6 and 9 in ibid. Finally, Theorems~8 and 10 in ibid build upon these seven Lemmas. Therefore, they hold as well in the perturbed setup. 
\end{itemize}
The desired result now follows. 
\end{proof}
\end{document}